\newtheorem{remark}{Remark}
\newtheorem{prop}{Proposition}
\begin{document}
\setlength{\abovedisplayskip}{5pt}
\setlength{\belowdisplayskip}{5pt}

\title{{End-to-End Deep Learning for TDD MIMO Systems in the 6G Upper Midbands}}

\author{\IEEEauthorblockN{Juseong Park, Foad Sohrabi, Amitava Ghosh, and Jeffrey G. Andrews} \\ 
\thanks{This work has been supported by Nokia Bell Labs and in part by the National Science Foundation RINGS program grant CNS-2148141.}
\thanks{Juseong Park and Jeffrey G. Andrews are with 6G@UT in the Wireless Networking and Communications Group at the University of Texas at Austin, Austin, TX 78712, USA (email: juseong.park@utexas.edu, jandrews@ece.utexas.edu).}
\thanks{Foad Sohrabi with the Radio Systems Research at Nokia Bell Labs, New
Providence, NJ 07974 USA (e-mail: foad.sohrabi@nokia-bell-labs.com).}
\thanks{A. Ghosh is with the NOKIA Standards at Nokia Bell Labs, Naperville, IL 60563 (e-mail: amitava.ghosh@nokia-bell-labs.com).}
}
\maketitle


\begin{abstract}
This paper proposes and analyzes novel deep learning methods for downlink (DL) single-user multiple-input multiple-output (SU-MIMO) and multi-user MIMO (MU-MIMO) systems operating in time division duplex (TDD) mode. A motivating application is the 6G upper midbands (7-24 GHz), where the base station (BS) antenna arrays are large, user equipment (UE) array sizes are moderate, and theoretically optimal approaches are practically infeasible for several reasons.
To deal with uplink (UL) pilot overhead and low signal power issues, we introduce the channel-adaptive pilot, as part of the novel analog channel state information feedback mechanism. 
Deep neural network (DNN)-generated pilots are used to linearly transform the UL channel matrix into lower-dimensional latent vectors. 
Meanwhile, the BS employs a second DNN that processes the received UL pilots to directly generate near-optimal DL precoders.  The training is end-to-end which exploits synergies between the two DNNs.  For MU-MIMO precoding, we propose a DNN structure inspired by theoretically optimum linear precoding. The proposed methods are evaluated against genie-aided upper bounds and conventional approaches, using realistic upper midband datasets. Numerical results demonstrate the potential of our approach to achieve significantly increased sum-rate, particularly at moderate to high signal-to-noise ratio (SNR) and when UL pilot overhead is constrained. 
\end{abstract}

\begin{IEEEkeywords}
6G, mid-band, MIMO, deep learning, TDD, precoding, CSI feedback.
\end{IEEEkeywords}

\section{Introduction} \label{sec:introduction}
Multiple-input multiple-output (MIMO) systems are in the midst of an interesting evolution due to the application of deep learning. Although the theory of MIMO systems is quite well understood at this point, optimum (or near-optimum) approaches often have important shortcomings in modern and future MIMO systems, with large antenna arrays. Examples include uplink (UL) overhead constraints which limit the amount of channel state information (CSI) feedback or UL pilots that can be sent, the use of analog or hybrid beamforming as opposed to fully digital beamforming, and the corresponding use of suboptimal precoders. Other hardware constraints and nonidealities such as quantization and nonlinear amplifiers erode the efficacy of supposedly optimal approaches. A well-designed deep learning approach can in principle adapt smoothly to all of these nonidealities.

The upper midbands, spanning carrier frequencies from roughly 7 to 24 GHz and referred to as Frequency Range 3 (FR3),
will likely be the most consequential new spectrum for 6G-era cellular systems \cite{Miao:23}. Focusing on the lower (e.g. 7-15 GHz) portion, this spectrum provides substantial new bandwidth along with favorable propagation compared to millimeter-wave (mmWave) bands \cite{Samsung:22}, \cite{FCCtac:23}, \cite{Kang:23}. Such carrier frequencies thus appear quite promising for increasing urban and suburban coverage and capacity to meet projected future demand \cite{ITU:23}. The antenna array dimensionality will increase substantially over conventional lower bands, while still allowing more digital beamforming and richer scattering as opposed to the mmWave bands. These novel characteristics make the upper midbands an interesting case study for the application of end-to-end (E2E) deep learning principles.  In this paper, we jointly design and learn user equipment (UE) and base station (BS) algorithms for UL pilot transmission, channel estimation, and downlink (DL) precoding, for a variety of single-user MIMO (SU-MIMO) and multi-user MIMO (MU-MIMO) scenarios. 

\subsection{Related Work}

Extensive research has focused on MIMO precoding, including recent applications of deep learning to MU-MIMO DL precoding to reduce computational complexity and approach optimal precoding in a wider range of scenarios.  
Theoretically, dirty paper coding  \cite{Caire:03} and UL-DL duality-based approaches \cite{Vishwanath:03} lead to optimal (nonlinear) precoding schemes. However, the iterative and idealized nature of these methods results in complex and ultimately suboptimal solutions when real-world constraints are considered.
Shifting our focus to more practical linear precoding methods, especially for MU-MIMO with single-antenna UEs, the optimal linear beamforming structure has been studied in \cite{Bjornson:14}.
Inspired by this structure, deep learning-based beamforming methods have been proposed in \cite{Xia:20} and \cite{Zhang:22}. However, extending these methods to scenarios with multiple antennas per UE is not straightforward, as that beamforming structure is no longer optimal. Some efforts in this direction include employing deep unfolding techniques to simplify the weighted minimum mean square error (WMMSE) algorithm \cite{Hu:21} and creating deep neural network (DNN) modules that directly compute WMMSE variables \cite{Jang:22}. 
In addition, a deep learning-based linear precoding method is proposed in \cite{Huang:20}, which leverages the UL-DL duality as discussed in \cite{Vishwanath:03, Shi:08}.

Turning our attention to MIMO channel estimation and pilot sequence design, least squares (LS) and linear minimum mean square error (LMMSE) estimation methods with given pilot sequences have been well-documented \cite{Biguesh:04, Sun:02, Binguesh:06}.  
Optimizing pilots over the specific channel distribution can enhance the accuracy of channel estimation, and this process can be efficiently facilitated using a data-driven approach through deep learning, as suggested in \cite{Chun:19}.

Precoding design can be further improved by considering the preceding channel estimation.
In time division duplex (TDD) mode, a DL precoding method employing LMMSE UL channel estimation is proposed in \cite{Jose:11}, based on channel reciprocity. While the design of precoders takes into account the probability distribution of channel estimation error, obtaining this error distribution accurately remains challenging in practical scenarios.
To address this, deep learning has been utilized to optimize the precoder in a data-driven manner. In \cite{Zhang:22}, an optimal structure-based MU-MIMO precoding method for single-antenna UEs is presented, considering the least square (LS) channel estimation. Similarly, in \cite{Lee:23}, a sequential optimization method consisting of adaptive-length random pilots, LS channel estimation, and precoding is proposed using DNNs.
Additionally, \cite{Attiah:22} proposes a deep learning-based hybrid beamforming method with LMMSE channel estimation.
However, the precoding performances in \cite{Zhang:22}, \cite{Lee:23}, and \cite{Attiah:22} are constrained due to the limitation of linear channel estimation approaches.

To fully leverage the potential of deep learning, the joint optimization of channel estimation and precoding has been the subject of recent studies \cite{Sohrabi:21, Jang:22}.
In \cite{Sohrabi:21}, an E2E deep learning approach for MU-MIMO systems with single-antenna UEs is presented. This approach includes DL channel estimation with optimized pilots, digital feedback, and DL precoding in FDD mode. Receiving DL pilots at UEs, the UE DNN encodes the pilots into bits, which are then perfectly received and used by the BS's DNN to compute the MU-MIMO precoder. 
This approach has been extended to systems with multiple-antenna UEs in \cite{Jang:22}, incorporating DNN modules inspired by the WMMSE method for precoding. However, digital feedback adopted in \cite{Sohrabi:21, Jang:22} necessitates that UEs accurately estimate the DL channels, albeit implicitly, and quantize the complex-valued channels into bits. 
This results in the need for the UE DNNs to have powerful computational capabilities, which is often limited in actual deployment.
Furthermore, since \cite{Sohrabi:21, Jang:22} assume perfect decoding of the feedback bits from users, any actual errors in CSI feedback might negatively affect precoding performance.

\subsection{Contributions}
This paper introduces E2E deep learning methodologies for UL pilot design, CSI acquisition, and DL precoding for SU-MIMO and MU-MIMO systems in TDD mode. 
Our four main contributions are outlined as follows.

\subsubsection{Efficient analog CSI feedback mechanism}
A novel concept, the \emph{channel-adaptive pilot}, is introduced for efficient analog CSI feedback.
Under the assumption that the UE has access to at least partial DL CSI, the UE's DNN adaptively generates pilots in response to the CSI. 
When transmitted over the UL, these pilots are multiplied with the MIMO channel matrix, thereby rendering the received pilots as a linearly compressed representation of the channel.
Regarding efficiency, traditional MIMO channel training methods usually require a number of pilots at least equal to the number of UE antennas for channel reconstruction. Our approach, in contrast, reduces the number of necessary pilots by compressing the CSI, preserving only the essential information for precoding.
Furthermore, unlike conventional pilots, there is no prerequisite for the BS to know the channel-adaptive pilots in advance. 
Lastly, our proposed channel-adaptive pilot demonstrates robustness to noise by projecting the channel onto the angular domain. This makes it well-suited for practical scenarios with low power UL signals, an important consideration.

\subsubsection{Joint CSI acquisition and precoding modules tailored for TDD systems} 
Upon receiving the noisy UL pilots, our proposed BS DNN performs joint CSI acquisition and DL precoding for either SU-MIMO or MU-MIMO systems. Since these pilots are directly designed for precoding, they may not provide enough information for full channel reconstruction. However, during the BS DNN's joint optimization process, only the essential parts of the CSI are implicitly recovered and utilized for DL precoding based on channel reciprocity. Accordingly, we refer to this process of extracting essential CSI for precoding as \emph{CSI acquisition}, rather than channel estimation.

\subsubsection{A theory-guided MU-MIMO precoding module} 
We propose a structured learning module for optimal MU-MIMO precoding, employing the identified necessary condition of optimal precoding as an inductive bias for training the BS DNNs. 
Specifically, based on the UL-DL duality principle, the optimal linear precoder is formulated in the form of LMMSE. 
As \cite{Jang:22} highlights, a naive DNN approach for MU-MIMO precoding can lead to significant sum-rate losses, particularly in high signal-to-noise ratio (SNR) ranges. However, by adopting this structured approach, we have observed significant improvements in MU-MIMO precoding performance in our experiments. We further simplify the computational complexity of this structure by reducing the dimensions involved in matrix inverse. 
Lastly, distinct from the mean square error (MSE) duality-based approach of \cite{Huang:20}, our method focuses on sum capacity duality and is tailored for joint CSI acquisition and precoding. 

\subsubsection{Application to 6G mid-band channels with two distinct datasets} 
To demonstrate the future relevance of our approaches, we applied them to upper mid-band channel data using two distinct datasets: a private 7 GHz band dataset provided by Nokia Bell Labs and a separate 7 GHz dataset generated by using the QuaDRiGa simulator \cite{Jae:21}. The Nokia ray-tracing dataset, which is compliant with the 3rd Generation Partnership Project (3GPP) specification TR 38.901 \cite{ETSI:20}, incorporates realistic aspects of wireless communications, including physical antennas, virtual antenna ports, polarizations, and other practical parameters. Experiments on these datasets establish the excellent performance of our method compared to both theoretical and practical benchmarks.


\section{System Model} \label{sec:System Model}
\subsection{DL Data Transmission and UL Pilots Models} \label{subsec:DL Data Transmission and UL Pilots Models}

Consider a DL multi-user MIMO system where a BS equipped with $N_t$ antennas serves $K$ UEs. It is assumed that the $k$-th UE is equipped with $N_{r,k}$ antennas and receives $N_{s,k}$ data streams from the BS. However, for simplicity, we assume $N_{r,k} = N_{r}$ and $N_{s,k} = N_{s}, \forall k$. We further assume that the system operates in TDD mode, and channel reciprocity holds. Letting the $k$-th UE's channel be ${\mathbf{H}}_k \in \mathbb{C}^{N_t \times N_r}$, the DL received signal at the $k$-th UE ${\mathbf{y}}^{\text{dl}}_k$ based on a narrowband signal model can be written as
\begin{align}
{{\mathbf{y}}^{\text{dl}}_k} = {\mathbf{H}}_k^H {\mathbf{x}}^{\text{dl}} + {\mathbf{n}}_k^{\text{dl}},
\label{eqn:dl_signalmodel}
\end{align}
where ${\mathbf{n}}_k^{\text{dl}} \sim \mathcal{CN}({\mathbf{0}}, \sigma^2_k {\mathbf{I}}_{N_r})$ represents the additive White Gaussian noise at UE $k$ with zero mean and covariance matrix $\sigma^2_k{\mathbf{I}}_{N_r}$, and ${\mathbf{x}}^{\text{dl}}$ indicates the transmitted DL signal.
Note that it is narrowband; thus, per subband (i.e., coherence band), frequency selective fading and correlation are ignored. Assuming that the BS adopts linear precoding, the transmitted signal ${\mathbf{x}}^{\text{dl}}$ can be expressed as
\begin{align}
{\mathbf{x}}^{\text{dl}} = \sum_{k=1}^{K} \mathbf{F}_k \mathbf{s}_k,
\label{eqn:dl_precodingmodel}
\end{align}
where $\mathbf{F}_k \in \mathbb{C}^{N_t \times N_s}$ denotes the precoding matrix for UE $k$ satisfying the total power constraint $\sum_{k=1}^{K} \text{Tr}\left({\mathbf{F}_k \mathbf{F}_k^H}\right) \leq \mathcal{E}_s$, and $\mathbf{s}_k \in \mathbb{C}^{N_s}$ indicates the transmitted data which satisfies $\mathbb{E}\left[\mathbf{s}_k \mathbf{s}_k^H\right] = \mathbf{I}_{N_s}$.

For CSI acquisition at the BS, analog CSI feedback is considered, where the BS computes the DL precoding matrices based on the received UL pilots. Assuming that the UL pilot signals are orthogonal in the time domain to each other across all UEs, the received UL pilot signal from UE $k$, denoted as ${\mathbf{Y}_k^{\text{ul}} \in \mathbb{C}^{N_t \times N_p}}$, is defined as 
\begin{align}
{{\mathbf{Y}}^{\text{ul}}_k} = {\mathbf{H}}_k {\mathbf{P}}_k + {\mathbf{N}}_k^{\text{ul}},
\label{eqn:ul_signalmodel}
\end{align}
where ${\mathbf{P}}_k \in \mathbb{C}^{N_r \times N_p}$ represents the pilot matrix of UE $k$ that satisfies the power constraint $\text{Tr}\left({\mathbf{P}_k \mathbf{P}_k^H}\right) \leq \mathcal{E}_p, \forall k$. Here, $N_p$ denotes the length of the pilots, which is in units of time domain symbols, e.g., the OFDM symbol period. Additionally, ${\mathbf{N}}_k^{\text{ul}} \in \mathbb{C}^{N_t \times N_p}$ represents a noise matrix for the UL pilot transmission. 

Typically, UE $k$ and the BS share knowledge of the pilot matrix $\mathbf{P}_k$ for channel estimation, since the channel estimation is a function of $\mathbf{P}_k$. In the proposed method, however, there is no need for the BS to possess knowledge of $ \mathbf{P}_k $. Furthermore, we explore the adaptive design of ${\mathbf{P}}_k$ based on ${\mathbf{H}}_k$, assuming it is known at the UE. This is captured by introducing a pilot matrix calculation function
\begin{align}
{\mathbf{P}}_k= f_{\text{UE}} \left({\mathbf{H}}_k \right).
\label{eqn:pilot_generation}
\end{align}
We will empirically demonstrate that the assumption of perfect knowledge of ${\mathbf{H}}_k$ at the UE can be relaxed to scenarios in which the UE possesses only partial information about ${\mathbf{H}}_k$, obtained through DL probing beams.

After receiving the UL pilots for all UEs, 
the BS computes the precoding matrices as
\begin{align}
\left[
\mathbf{F}_1, \cdots, \mathbf{F}_K
\right] = f_{\text{BS}} \left({{\mathbf{Y}}}^{\text{ul}}_1, \cdots, {{\mathbf{Y}}}^{\text{ul}}_K \right),
\label{eqn:DL_precodercomputation}
\end{align}
where $ f_{\text{BS}}(\cdot) $ denotes the precoding function, which also implicitly extracts CSI. Then, the achievable rate for UE $k$ in the context of the DL signal models \eqref{eqn:dl_signalmodel} and \eqref{eqn:dl_precodingmodel} is given by
\begin{align}
R_k = \log_2 \det \left( \mathbf{I}_{N_s} + \mathbf{F}_k^H \mathbf{H}_k \mathbf{C}_{\Tilde{{\mathbf{n}}}_k^{\text{dl}}}^{-1} \mathbf{H}_k^H \mathbf{F}_k \right),
\label{eqn:rate_model}
\end{align}
where $\mathbf{C}_{\Tilde{{\mathbf{n}}}_k^{\text{dl}}}$ denotes the effective DL noise-plus-interference covariance matrix for UE $k$, defined as
\begin{align}
\mathbf{C}_{\Tilde{{\mathbf{n}}}_k^{\text{dl}}} = \left( \sigma_k^2 {\mathbf{I}}_{N_r} + \sum_{i=1, i \ne k}^K \mathbf{H}_k^H \mathbf{F}_i \mathbf{F}_i^H \mathbf{H}_k \right).
\end{align}

\subsection{Problem Formulation}
\label{subsec:problemformulation}
As is typical for MU-MIMO, we attempt to maximize the achievable sum-rate for an analog feedback-based TDD system, 
yielding
\begin{subequations}
\begin{align}
\max_{f_{\text{UE}}(\cdot), f_{\text{BS}}(\cdot)}
\quad & \sum_{k=1}^{K}R_k,\\
\text{subject to} ~~ & \text{Tr}\left({\mathbf{P}_k \mathbf{P}_k^H}\right) \leq \mathcal{E}_p, \quad \forall k, \label{eqn:powerconst_P}\\
& \sum_{k=1}^{K}\text{Tr}\left({\mathbf{F}_k \mathbf{F}_k^H}\right) \leq \mathcal{E}_s. \label{eqn:powerconst_F}
\end{align}\label{eqn:problem}
\end{subequations}
This problem involves the joint design of the pilot matrix at the UE and the precoders at the BS with implicit channel estimation. The objective is to maximize the sum-rate while satisfying the pilot and precoder power constraints \eqref{eqn:powerconst_P} and \eqref{eqn:powerconst_F}.

Additionally, we desire to attain the same optimized sum-rate performance with the fewest possible number of pilots, $N_p$.
From the perspective of linear transformation, when $N_p < N_r$, multiplying ${\mathbf{H}}_k$ by ${\mathbf{P}}_k$ effectively reduces the column dimension of ${\mathbf{H}}_k$ from $N_r$ to $N_p$. 
This implies that the linear transformation represented by ${\mathbf{P}}_k$ does not permit a pseudo-inverse (a one-sided inversion) when applied to ${\mathbf{H}}_k$.
Consequently, with $N_p < N_r$, the least square channel estimator \cite{Kay:10} does not exist, and thus the UL analog feedback with ${\mathbf{P}}_k$ can be viewed as a lossy compression. Nonetheless, since our primary objective is not the precise reconstruction of ${\mathbf{H}}_k$ but the computation of effective precoders, we will see this is acceptable. Therefore, the genuine challenge lies in how to linearly compress the channel matrix, or equivalently, linearly transform the channel, by designing the UL pilot matrix, without sacrificing the information vital for computing the optimal precoder. In the following section, we show that there exists ${\mathbf{P}}_k$ without information loss for optimal SU-MIMO precoding if $N_p \geq N_s$. Moreover, we will highlight deep learning's ability to capture the essential information for optimal precoding, even with fewer pilots than $N_s$.

\section{Analog CSI Feedback for Joint CSI Acquisition and Precoding in SU-MIMO} \label{Analog CSI Feedback for Joint CSI Acquisition and Precoding in SU-MIMO}

In this section, we explore a deep learning-based approach for the joint design of UL pilots and DL precoding in the SU-MIMO scenario, within the context of analog CSI feedback.
While our focus is on SU-MIMO, Fig.~\ref{figure:proposedstructure} provides a generalized depiction of the proposed DNN structure for the MU-MIMO system. In this architecture, two DNN structures are employed---one at the BS and one at each UE, denoted as $f_{\text{BS}}(\cdot;\boldsymbol{\Theta}_{\text{BS}})$ and $f_{\text{UE}}(\cdot;\boldsymbol{\Theta}_{\text{UE}})$, respectively. Here, $\boldsymbol{\Theta}_{\text{BS}}$ and $\boldsymbol{\Theta}_{\text{UE}}$ represent the trainable parameters in the DNNs of the BS and UE, respectively. For the SU-MIMO case, where $K=1$, the user index subscript is omitted from the channel matrix, UL and DL signals, and the pilot and precoding matrices, among others. In the SU-MIMO setting, the absence of interference from other users simplifies the noise-plus-interference covariance matrix to $\mathbf{C}_{\Tilde{{\mathbf{n}}}^{\text{dl}}} = \sigma^2 {\mathbf{I}}_{N_r}$. As a result, the rate expressed in \eqref{eqn:rate_model} is now recast as
\begin{align}
R = \log_2 \det \left( \mathbf{I}_{N_s} + \frac{1}{\sigma^2} \mathbf{F}^H \mathbf{H} \mathbf{H}^H \mathbf{F} \right),
\label{eqn:singlerate}
\end{align}
Subsequently, the optimization problem in \eqref{eqn:problem} is adapted to maximize this revised $R$ while preserving the original power constraints on the pilot and the precoder.

\begin{figure}
\begin{center}
\includegraphics[width=0.9\linewidth]{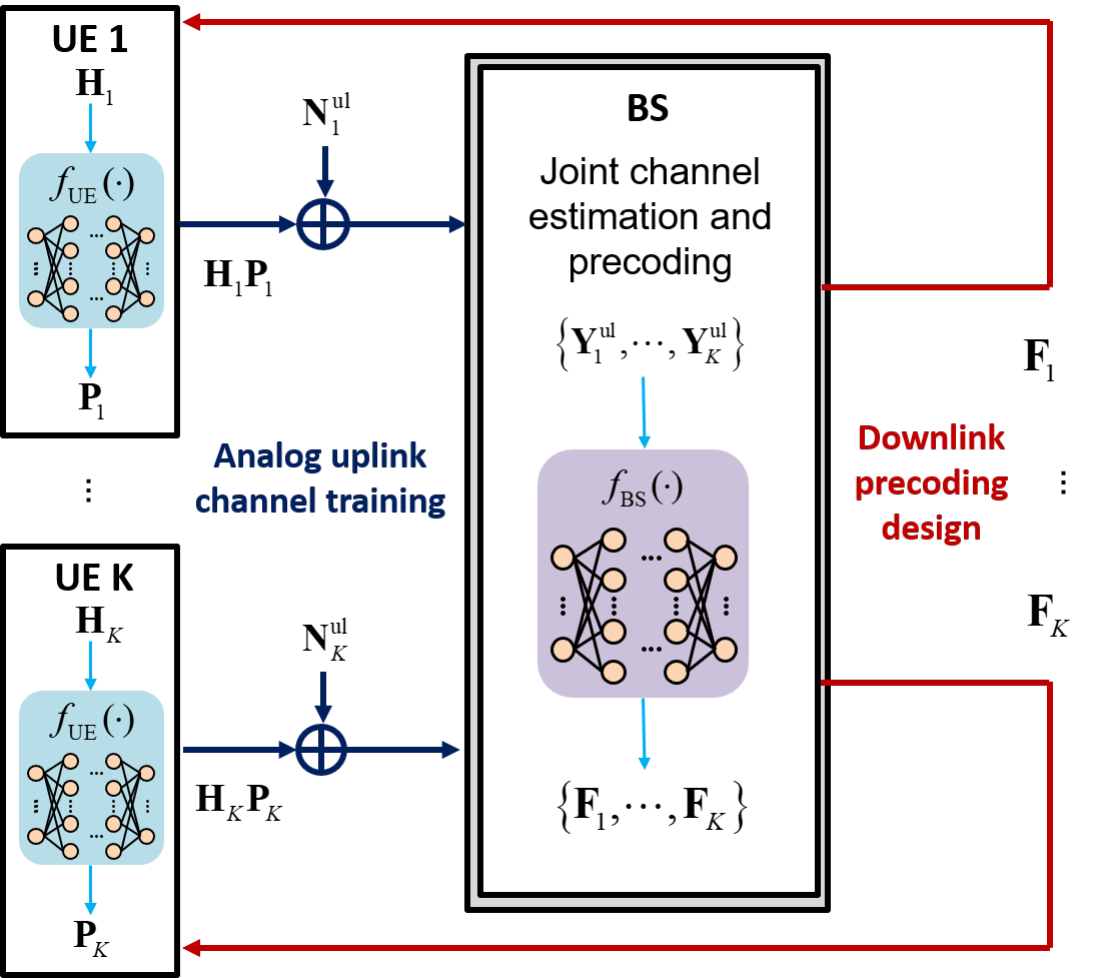}
\end{center}
\caption{The proposed deep neural network structure for UL analog channel training and DL precoding in MU-MIMO.}
\label{figure:proposedstructure}
\end{figure}


\subsection{Channel-Adaptive Pilot}

The deep learning-based channel-adaptive pilot for analog CSI feedback is detailed as follows. The UE DNN $f_{\text{UE}}(\cdot;\boldsymbol{\Theta}_{\text{UE}})$ generates the pilot matrix $\mathbf{P}$ as defined in \eqref{eqn:pilot_generation} assuming that DL channel $\mathbf{H}^H$ is given.
Thereafter, this matrix 
$\mathbf{P}$ is transmitted from the UE and received by the BS as in \eqref{eqn:ul_signalmodel}.

The proposed UE DNN $f_{\text{UE}}(\cdot;\boldsymbol{\Theta}_{\text{UE}})$ comprises two hidden fully connected (FC) layers followed by an output layer. The FC layers employ rectified linear unit (ReLU) activations, and each layer is equipped with batch normalization. The size of the output layer corresponds to twice the number of elements in the pilot matrix $\mathbf{P}$ for the real and imaginary parts of the matrix, and a linear activation function is used to account for potential negative values in $\mathbf{P}$. After the activation phase, the outputs are normalized to satisfy the power constraint \eqref{eqn:powerconst_P}.

Contrary to conventional systems, where the BS is informed about the specific pilot matrix $\mathbf{P}$ sent by the UE, our design eliminates this prerequisite; in other words, the UE does not explicitly convey to the BS which $\mathbf{P}$ is being used. This implies that the UE linearly compresses the channel matrix by multiplying it with $\mathbf{P}$, the specifics of how to compress, encapsulated in $\mathbf{P}$, are not disclosed to the BS.

While this process fundamentally involves linear transformation, the role of the UE DNN $f_{\text{UE}}\left(\cdot;\boldsymbol{\Theta}_{\text{UE}}\right)$ is crucial. This DNN computes the channel-adaptive pilot matrix $\mathbf{P}$, adapting to $\mathbf{H}^H$ through its inherent nonlinear approximation capabilities. In short, the channel-adaptive pilot is generated nonlinearly by the DNN and serves to compress the UL channel linearly.

In Section~\ref{sec:System Model}--\ref{subsec:problemformulation}, we noted that, from the linear transformation perspective, the condition $N_p \geq N_r$ is requisite for a complete reconstruction of the UE's UL channel matrix. However, when focusing on optimal precoding in the SU-MIMO system, the number of pilots $N_p$ can be less than the number of UE antennas $N_r$. 
Below is our proposition regarding the existence of the matrix $\mathbf{P}$, which preserves the core information for optimal precoding.

\begin{prop}\label{prop:1}
For a TDD-based noiseless UL analog feedback system, there exists a matrix $\mathbf{P}$ such that the optimal DL SU-MIMO precoder can be derived from the feedback $\mathbf{Y} = \mathbf{H} \mathbf{P}$, provided that $N_p \geq N_s$.
\end{prop}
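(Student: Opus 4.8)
The plan is to exhibit an explicit $\mathbf{P}$ built from the singular value decomposition (SVD) of $\mathbf{H}$, and then show the BS can reconstruct exactly the quantities on which the optimal precoder depends. First I would recall the classical structure of the capacity-achieving SU-MIMO precoder for \eqref{eqn:singlerate}: writing the economy SVD $\mathbf{H} = \mathbf{U}\boldsymbol{\Sigma}\mathbf{V}^H$ with $\mathbf{U}\in\mathbb{C}^{N_t\times r}$, $\mathbf{V}\in\mathbb{C}^{N_r\times r}$, $\boldsymbol{\Sigma}=\mathrm{diag}(\sigma_1,\dots,\sigma_r)$ and $r=\mathrm{rank}(\mathbf{H})\ge N_s$, the optimal precoder is $\mathbf{F}^\star=[\mathbf{u}_1,\dots,\mathbf{u}_{N_s}]\,\mathrm{diag}(\sqrt{p_1},\dots,\sqrt{p_{N_s}})$, where $\mathbf{u}_i$ is the $i$-th column of $\mathbf{U}$ (the dominant left singular vectors of $\mathbf{H}$, equivalently the dominant right singular vectors of the DL channel $\mathbf{H}^H$) and $\{p_i\}$ is the water-filling power allocation determined by the top $N_s$ singular values $\{\sigma_i\}_{i=1}^{N_s}$. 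The key observation is that $\mathbf{F}^\star$ is a function only of the pairs $\{(\mathbf{u}_i,\sigma_i)\}_{i=1}^{N_s}$, and it is insensitive to an independent unit-modulus phase rotation of each $\mathbf{u}_i$, since such a rotation only changes the irrelevant global phase of the symbol carried on that stream.

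Next I would construct the pilot. Since the UE knows $\mathbf{H}$, let it form, via \eqref{eqn:pilot_generation}, the matrix $\mathbf{P}=c\,[\mathbf{v}_1,\dots,\mathbf{v}_{N_s}]$ (appending $N_p-N_s\ge 0$ zero columns when $N_p>N_s$), where $\mathbf{v}_i$ is the $i$-th right singular vector of $\mathbf{H}$ and $c=\sqrt{\mathcal{E}_p/N_s}$ is chosen so that $\mathrm{Tr}(\mathbf{P}\mathbf{P}^H)=c^2 N_s=\mathcal{E}_p$, satisfying \eqref{eqn:powerconst_P}; note $c$ depends only on $\mathcal{E}_p$ and $N_s$, which are common knowledge. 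Substituting into $\mathbf{Y}=\mathbf{H}\mathbf{P}$ and using $\mathbf{V}^H\mathbf{v}_i=\mathbf{e}_i$ (orthonormality of the columns of $\mathbf{V}$, where $\mathbf{e}_i$ is the $i$-th standard basis vector) yields $\mathbf{Y}=c\,[\sigma_1\mathbf{u}_1,\dots,\sigma_{N_s}\mathbf{u}_{N_s}]$, possibly with zero columns appended.

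Finally I would show recovery at the BS. From $\mathbf{Y}$ the BS reads the $i$-th nonzero column, whose Euclidean norm is $c\,\sigma_i$ (since $\|\mathbf{u}_i\|=1$), hence recovers $\sigma_i$ because $c$ is known, and normalizing that column recovers $\mathbf{u}_i$ up to a phase. With $\{\sigma_i\}_{i=1}^{N_s}$ in hand the BS performs water-filling to obtain $\{p_i\}$ and assembles $\mathbf{F}=[\mathbf{u}_1,\dots,\mathbf{u}_{N_s}]\,\mathrm{diag}(\sqrt{p_1},\dots,\sqrt{p_{N_s}})$; by the phase-insensitivity noted above this equals $\mathbf{F}^\star$ up to an immaterial per-stream phase, so it is optimal, and it automatically satisfies \eqref{eqn:powerconst_F}. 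This proves the claim for every $N_p\ge N_s$.

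I expect the only delicate points to be bookkeeping rather than conceptual: (i) arguing carefully that the per-column phase ambiguity in extracting $\mathbf{u}_i$ from $\mathbf{Y}$ leaves the rate \eqref{eqn:singlerate} unchanged, and (ii) handling the degenerate case $\mathrm{rank}(\mathbf{H})<N_s$, where fewer than $N_s$ streams are supportable and one simply assigns zero power to the deficient streams (and the corresponding columns of $\mathbf{Y}$ vanish). Everything else follows from the unitary invariance of the SVD and the standard water-filling characterization of the SU-MIMO optimum.
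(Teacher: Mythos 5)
Your proposal is correct and follows essentially the same route as the paper's own proof in Appendix A: take $\mathbf{P}$ to be (scaled) dominant right singular vectors of $\mathbf{H}$, observe that $\mathbf{Y}=\mathbf{H}\mathbf{P}$ then consists of the left singular vectors scaled by known multiples of the singular values, and let the BS recover the singular values from the column norms and apply water-filling to assemble the capacity-achieving precoder. Your additional remarks (zero-padding when $N_p>N_s$, the per-stream phase/rank-deficiency bookkeeping) are harmless refinements of the same argument; in fact, since $c\sigma_i>0$, normalizing each column recovers $\mathbf{u}_i$ exactly with no phase ambiguity.
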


\begin{proof}
When $\mathbf{P}$ is constructed using the right singular vectors of $\mathbf{H}$, there is no information loss from optimal precoding under a noiseless setup. For the full proof, see Appendix~\ref{app:proof_prop1}.
\end{proof}

Proposition 1 establishes, at least theoretically, that the UE DNN, coupled with the subsequent process of the BS DNN, has the potential to design a channel-adaptive pilot matrix for the system, especially when $N_p \geq N_s$. Intriguingly, these DNNs can further operate effectively even when $N_p < N_s$ by capturing the inherent characteristics of the channel. We present numerical evidence in Section~\ref{sec:Performance Analysis} to support this observation.

\subsection{Joint CSI Acquisition and Precoding}

In Fig.~\ref{figure:proposedstructure}, the BS DNN $f_{\text{BS}}\left(\cdot;\boldsymbol{\Theta}_{\text{BS}}\right)$ takes the noisy UL pilot signal ${{\mathbf{Y}}^{\text{ul}}}$, defined in \eqref{eqn:ul_signalmodel}, as its input. This network then directly produces precoders $\mathbf{F}$ crafted to minimize a loss function, to be elaborated upon in the following subsection.
The process can be written as
\begin{align}
    \mathbf{F} = f_{\text{BS}}\left(\mathbf{Y}^{\text{ul}};\boldsymbol{\Theta}_{\text{BS}}\right).
    \label{eqn: BS DNN for SU-MIMO}
\end{align} 

The architecture of BS DNN $f_{\text{BS}}(\cdot;\boldsymbol{\Theta}_{\text{BS}})$ is similar to that of the UE DNN $f_{\text{UE}}(\cdot;\boldsymbol{\Theta}_{\text{UE}})$, with two FC layers followed by an output layer. Both FC layers utilize ReLU activations and are equipped with batch normalization. A notable distinction lies in the size of the FC layers. Also, the output is normalized to satisfy \eqref{eqn:powerconst_F}.

Note that the DNN operates with noisy CSI as its input, thus the DNN acts as a nonlinear function approximator, tasked with mastering a 'many-to-one' mapping. To be specific, due to the noise, each instance of CSI can present in various forms, reflecting different noise conditions. Consequently, the DNN needs to learn how to map these diverse forms into a single, optimal precoder configuration.
However, as will be discussed in Section~\ref{sec:Performance Analysis}, our method effectively mitigates these challenges by projecting the channel matrix into the low dimensional angular domain with $\mathbf{P}$, illustrating its robustness to noise and thus highlighting the effectiveness of our proposed method in such challenging scenarios.

The learned precoder possibly differs from the well-known capacity-achieving precoder which is defined by the product of the right singular vectors of the DL channel matrix, $\mathbf{H}^H$, and the optimal power allocation matrix, $\mathbf{B}$, which can be mathematically represented as
\begin{align}
\mathbf{F}_{\text{opt}} = {\mathbf{U}}_{1:N_s} \mathbf{B}^{\frac{1}{2}}
\end{align}
where ${\mathbf{U}}$ is the right singular matrix of $\mathbf{H}^H$. Notably, multiplying any unitary matrix to the right side of $\mathbf{F}_{\text{opt}}$ will still result in a precoder that achieves the channel capacity. The key necessary condition for an optimal precoder is for its left singular matrix to match the right singular matrix of the channel $\mathbf{H}^H$. Given this, when directly optimizing \eqref{eqn:singlerate}, the learned precoder can be expressed as
\begin{align}
 \mathbf{F} = f_{\text{BS}}(\mathbf{Y}^{\text{ul}};\boldsymbol{\Theta}_{\text{BS}}) = {\mathbf{U}}_{1:N_s} \mathbf{B}^{\frac{1}{2}} \mathbf{C}
 \label{eqn:learned_SUMIMO_precoder}
\end{align}
where $\mathbf{C}$ is an arbitrary unitary matrix that the DNN adjusts based on each given $\mathbf{Y}^{\text{ul}}$.

\begin{remark}
As long as the left $N_s$ columns of $\mathbf{U}$ are chosen and water-filling (WF) power optimization is conducted over the corresponding singular values $\mu_1,\cdots,\mu_{N_s}$, the order of the columns of ${\mathbf{U}}_{1:N_s}$ does not impact the channel capacity. This implies that the order of the columns of ${\mathbf{U}}_{1:N_s}$ of the learned precoder can be varied across $\mathbf{H}$, similarly to $\mathbf{C}$.

\end{remark}


\subsection{Training Strategy}
Our proposed scheme is trained in an E2E manner to maximize the objective function defined in \eqref{eqn:problem}, utilizing the modified SU-MIMO rate expression in \eqref{eqn:singlerate}. To accomplish this, we construct a loss function $\mathcal{L}$ as
\begin{align}
\mathcal{L}({\boldsymbol{\Theta}}_{\text{UE}}, \boldsymbol{\Theta}_{\text{BS}}) \!=\! - {\mathbb{E}}_{\mathbf{H}} \left[ \log_2 \det \left( \mathbf{I}_{N_s} + \frac{1}{\sigma^2} \mathbf{F}^H \mathbf{H} \mathbf{H}^H \mathbf{F} \right) \right]
\end{align}
where we take the expectation over the distribution of $\mathbf{H}$, to enable our model to adapt effectively across diverse channel realizations. 

To update the trainable variables ${\boldsymbol{\Theta}}_{\text{UE}}$ and $\boldsymbol{\Theta}_{\text{BS}}$, we use stochastic gradient descent (SGD), wherein the sample mean is computed over the mini-batch. The training process can be expressed as
\begin{align} 
&\boldsymbol{\Theta }_{i}\leftarrow \boldsymbol{\Theta }_{i} \nonumber\\
&-\,\xi \nabla _{\boldsymbol{\Theta }_{i}} \!\!\left({\frac{1}{M_B}}\sum_{j=1}^{M_B} \left[ \log_2 \det \left( \mathbf{I}_{N_s} + \frac{1}{\sigma^2} \mathbf{F}_j^H \mathbf{H}_j \mathbf{H}_j^H \mathbf{F}_j \right) \right] \right),\label{eqn:sgd}
\end{align}
where $i$ represents either the BS or the UE. Here, $\xi$ denotes the learning rate, and $M_B$ is the number of elements in a mini-batch, respectively.

\section{MU-MIMO Systems} \label{sec:MU-MIMO Systems}
We now extend the proposed scheme to MU-MIMO systems. The key difference between the sum-rate maximization problem, as outlined in \eqref{eqn:problem}, and the previous SU-MIMO achievable rate maximization problem, lies in the importance of inter-user interference (IUI) terms. Also, we consider a generalized IUI setting where each user has multiple streams. The channel feedback requirements for optimal precoding in MU-MIMO might be expected to differ from the SU-MIMO case, on account of the role of IUI.

However, the structure of the UE DNN for the UL channel-adaptive pilot in the MU-MIMO system is similar to its counterpart in the SU-MIMO case. We assume that each UE utilizes an identical DNN architecture and also shares the same weights, which greatly simplifies training. 
Note that using the same weights does not produce universal pilots across UEs, as the pilots are a function of the CSI. Therefore, the DNN for UE $k$ can be expressed as
\begin{align}
\mathbf{P}_k = f_{\text{UE}}(\mathbf{H}_k;\boldsymbol{\Theta}_{\text{UE}}).
\label{eqn:UE DNN in SU-MIMO}
\end{align}
In this equation, $f_{\text{UE}}(\cdot)$ and $\boldsymbol{\Theta}_{\text{UE}}$ are not subscripted with $k$, highlighting that all UEs share the same pre-trained DNN, which can be thought of as representative of UEs over the entire cell.

\subsection{Naive DNN-Based Joint CSI Acquisition and Precoding Module}
The first BS DNN module we consider is a direct extension of the one used for SU-MIMO, as represented in \eqref{eqn: BS DNN for SU-MIMO}. Upon receiving the UL pilots from all $K$ users, the BS computes the linear precoders for each of the $K$ UEs. This process can be articulated as
\begin{align}
\left[ \mathbf{F}_1, \cdots, \mathbf{F}_K \right] = f_{\text{BS}} \left(\mathbf{Y}_1^{\text{ul}}, \cdots, \mathbf{Y}_K^{\text{ul}};\boldsymbol{\Theta}_{\text{BS}} \right).
\label{eqn:DL_precodercomputation_dnn}
\end{align}

Unfortunately, as highlighted in \cite{Jang:22}, the naive DNN, which is a full DNN approach directly computing precoders, exhibits a significant sum-rate loss as the SNR increases. This phenomenon arises because the naive DNN might be proficient in power allocation but struggles to effectively mitigate IUI. Note that while power allocation between users predominantly determines the sum-rate performance at low SNR, suppressing IUI becomes increasingly important at higher SNR. 

\subsection{Optimal Structure-Based Joint CSI Acquisition and Precoding Module}

\begin{figure*}[t] 
    \centering
    \includegraphics[width=0.8\textwidth]{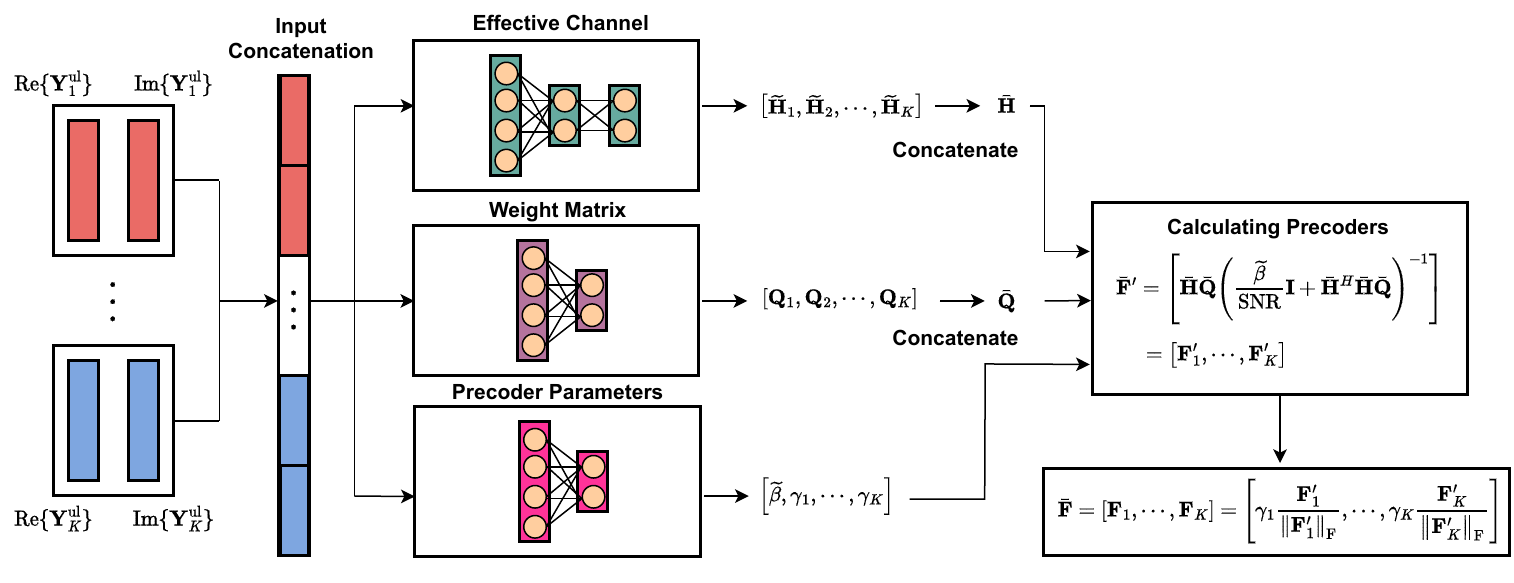}
    \caption{The proposed MU-MIMO scheme for joint CSI acquisition and precoding.}
    \label{figure:proposed_mumimo_structure}
\end{figure*}

To address the performance degradation observed in naive DNNs, we investigate the optimal structure of the MU-MIMO DL linear precoder that maximizes sum-rate, aiming to incorporate this structure into DNNs as an inductive bias. First, we simplify \eqref{eqn:problem}, which comprises multiple subproblems.

To focus on the MU-MIMO DL precoding part, we introduce two assumptions to reduce it to the standard MU-MIMO DL precoding problem. First, we assume that the function $f_\text{UE}(\cdot)$ is given. Second, assuming the channel matrices have been estimated, we directly compute the precoder $\mathbf{F}_k$.
Under these assumptions, the original problem in \eqref{eqn:problem} simplifies to
\begin{subequations}
\begin{align}
\max_{\mathbf{F}_1, \cdots, \mathbf{F}_K}
\quad & \sum_{k=1}^{K}R_k,\\
\text{subject to} ~~ & \sum_{k=1}^{K}\text{Tr}({\mathbf{F}_k \mathbf{F}_k^H}) \leq \mathcal{E}_s\label{eqn:powerconst_F_2}
\end{align}\label{eqn:MU-MIMO_BC_problem}
\end{subequations}
where $R_k$ is as defined in \eqref{eqn:rate_model}.

In the MU-MIMO DL sum-rate maximization problem outlined by \eqref{eqn:MU-MIMO_BC_problem}, the non-convex nature of the problem complicates the identification of the optimal solution structure. However, by utilizing the duality principles \cite{Vishwanath:03, Zhang:12}, the structure can be revealed through its corresponding dual UL problem. The dual UL problem for a given MU-MIMO DL is defined by the conjugate transpose of the DL's channel, with the power constraint for the DL transmitter being equal to the combined power constraints of all UL transmitters. Importantly, both the DL and its dual UL are shown to share the same capacity region \cite{Vishwanath:03}. This relationship persists even under linear processing conditions \cite{Shi:08}. Therefore, considering that the LMMSE receiver is the optimal linear choice in the dual UL problem associated with \eqref{eqn:MU-MIMO_BC_problem}, it follows that LMMSE also represents the optimal linear precoding for MU-MIMO DL systems.

However, this does not imply that the precoder of the DL is identical to the decoder of the dual UL. The distinction arises because, according to Chap. 9.5 of \cite{Heath:18}, power allocations in the two problems may differ, but this can be addressed by introducing linear filters. Building upon this duality, Proposition 2 introduces a necessary condition for optimal MU-MIMO DL linear precoding.

\begin{prop}\label{prop:2}
The optimal DL linear precoder for MU-MIMO is characterized by the LMMSE with reweighting.
In other words, in the MU-MIMO DL context, by denoting $k$-th UE's receiver and weight matrix as $\mathbf{W}_k$ and $\mathbf{Q}_k$, we have the optimal linear precoder as
\begin{align}
\mathbf{F}_k &= \gamma \! \left[ \sum_{m=1}^{K} \!\left( \beta_m\mathbf{I}_{N_t} \!\!+\! \mathbf{H}_m \mathbf{W}_m^H \mathbf{Q}_m \mathbf{W}_m \mathbf{H}_m^H\right) \right]^{-1} \!\!\!\!\! \mathbf{H}_k \mathbf{W}_k^H \mathbf{Q}_k \nonumber
\\
&= \gamma \! \left[ \beta\mathbf{I}_{N_t} + \sum_{m=1}^{K} \mathbf{H}_m \mathbf{W}_m^H \mathbf{Q}_m \mathbf{W}_m \mathbf{H}_m^H \right]^{-1} \!\!\!\!\! \mathbf{H}_k \mathbf{W}_k^H \mathbf{Q}_k  \label{eqn:MU-MIMO_optimal_structure}
\end{align}
where $\beta_m = \frac{\sigma_m^2}{\mathcal{E}_s} \mathrm{Tr}\left(\mathbf{W}_m^H \mathbf{Q}_m \mathbf{W}_m\right)$, $\beta = \sum_{m=1}^K {\beta_m}$ and $\gamma$ is a normalization factor ensuring $\sum_{k=1}^{K}\mathrm{Tr}\left({\mathbf{F}_k \mathbf{F}_k^H}\right) \leq \mathcal{E}_s$. 
\end{prop}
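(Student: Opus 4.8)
The plan is to derive the stated structure from the uplink--downlink duality already invoked above, applied at the level of linear transceivers, and to read off the multiplier $\beta$ from the active power constraint. First I would construct the \emph{dual uplink} (virtual MAC) associated with \eqref{eqn:MU-MIMO_BC_problem}: the channel from UE $m$ to the BS is $\mathbf{H}_m$ (the conjugate transpose of the DL channel $\mathbf{H}_m^H$), the total dual-UL transmit power is constrained to $\mathcal{E}_s$, and the per-UE downlink noise $\sigma_m^2\mathbf{I}_{N_r}$ is carried over to an effective noise at the BS. By \cite{Vishwanath:03, Shi:08}, \eqref{eqn:MU-MIMO_BC_problem} and this dual UL have the same achievable rate region under linear processing, so it suffices to exhibit the optimal linear transceiver of the dual UL and transfer it back to the DL; the DL/UL power-allocation mismatch is handled by inserting linear filters, as in Chap.~9.5 of \cite{Heath:18}.

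Second, I would argue that in the dual UL the LMMSE (Wiener) receiver is information-lossless and hence optimal among linear receivers. To cope with the \emph{multi-stream sum-rate} objective rather than a scalar SINR, I would pass through the weighted-MSE reformulation: maximizing $\sum_k R_k$ is equivalent to minimizing $\sum_k\big(\mathrm{Tr}(\mathbf{Q}_k\mathbf{E}_k)-\log\det\mathbf{Q}_k\big)$ jointly over the precoders, the UE filters $\mathbf{W}_k$, and the weights $\mathbf{Q}_k$, where $\mathbf{E}_k$ is the MSE matrix; stationarity in $\mathbf{W}_k$ and $\mathbf{Q}_k$ recovers the MMSE filter and $\mathbf{Q}_k=\mathbf{E}_k^{-1}$. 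This is how the receiver $\mathbf{W}_k$ and the reweighting matrix $\mathbf{Q}_k$ of the statement enter.

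Third, at any optimal point I would write the Lagrangian with multiplier $\mu$ for the sum-power constraint \eqref{eqn:powerconst_F_2} and set the derivative with respect to $\mathbf{F}_k$ to zero; the resulting linear system is solved in closed form by
\[
\mathbf{F}_k=\Big(\sum_{m=1}^{K}\mathbf{H}_m\mathbf{W}_m^H\mathbf{Q}_m\mathbf{W}_m\mathbf{H}_m^H+\mu\mathbf{I}_{N_t}\Big)^{-1}\mathbf{H}_k\mathbf{W}_k^H\mathbf{Q}_k,
\]
which is precisely the ``LMMSE with reweighting'' form, up to the scalar $\gamma$ that absorbs the DL/UL power normalization. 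Finally I would show the power constraint is active, substitute the stationary $\mathbf{W}_k,\mathbf{Q}_k$ into $\sum_k\mathrm{Tr}(\mathbf{F}_k\mathbf{F}_k^H)=\mathcal{E}_s$, and use the cyclic-trace identity $\mathrm{Tr}(\mathbf{Q}_m\mathbf{W}_m\mathbf{W}_m^H)=\mathrm{Tr}(\mathbf{W}_m^H\mathbf{Q}_m\mathbf{W}_m)$ together with the mapping of $\sigma_m^2\mathbf{I}_{N_r}$ through $\mathbf{W}_m$ to identify $\mu$ with $\beta=\sum_m\beta_m$, $\beta_m=\tfrac{\sigma_m^2}{\mathcal{E}_s}\mathrm{Tr}(\mathbf{W}_m^H\mathbf{Q}_m\mathbf{W}_m)$; writing $\beta\mathbf{I}_{N_t}=\sum_m\beta_m\mathbf{I}_{N_t}$ under the sum is merely a regrouping that pairs each $\beta_m$ with its UE term, giving \eqref{eqn:MU-MIMO_optimal_structure}.

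I expect the main obstacle to be the bookkeeping in this last step: tracking the per-UE noise $\sigma_m^2\mathbf{I}_{N_r}$ through the duality and the WMMSE stationarity so that the Lagrange multiplier collapses to the explicit $\beta_m$ decomposition rather than remaining implicit, and checking that the linear-filter correction for the DL/UL power mismatch affects only a scalar and not the left singular structure of $\mathbf{F}_k$. The duality statement ``same rate region'' is quoted from \cite{Vishwanath:03, Shi:08}, but converting it into a pointwise structural (necessary) condition on $\mathbf{F}_k$ is the delicate part.
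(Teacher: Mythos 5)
Your proposal is correct and takes essentially the same route as the paper: UL--DL duality under linear processing to justify the LMMSE-with-reweighting structure, with the exact form and the multiplier $\beta$ obtained from the weighted-MMSE stationarity conditions. The paper itself gives no detailed derivation---it argues via duality \cite{Vishwanath:03, Shi:08}, handles the power-allocation mismatch by citing Chap.~9.5 of \cite{Heath:18}, and notes in Remark~2 that \eqref{eqn:MU-MIMO_optimal_structure} is precisely the WMMSE stationarity condition of \cite{Shi:11}---so your sketch simply makes explicit (Lagrangian in $\mathbf{F}_k$, MMSE/weight stationarity, trace argument identifying $\mu=\beta=\sum_m\beta_m$) the computation the paper delegates to those references.
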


\begin{remark}
The optimal structure defined by \eqref{eqn:MU-MIMO_optimal_structure} has been partially adopted in many existing algorithms, such as the WMMSE algorithm \cite{Christensen:08, Shi:11}. Specifically, the stationary condition of the weighted sum mean square error minimization problem is equivalent to \eqref{eqn:MU-MIMO_optimal_structure} (See (15) of \cite{Shi:11}). As a result, several deep learning methods implementing the WMMSE algorithm, such as those in \cite{Jang:22, Hu:21}, achieve favorable MU-MIMO DL sum-rate performance due to their adoption of this structure.
\end{remark}

We further simplify \eqref{eqn:MU-MIMO_optimal_structure} by defining the effective channel of UE $k$ as $\widetilde{\mathbf{H}}_k = \mathbf{H}_k \mathbf{W}_k^H$, to obtain
\begin{align}
\mathbf{F}_k= \gamma \! \left[ \beta\mathbf{I}_{N_t} + \sum_{m=1}^{K} \widetilde{\mathbf{H}}_m \mathbf{Q}_m \widetilde{\mathbf{H}}_m^H \right]^{-1} \widetilde{\mathbf{H}}_k^H \mathbf{Q}_k. \label{eqn:MU-MIMO_optimal_structure_simplified}
\end{align}
Note that \eqref{eqn:MU-MIMO_optimal_structure_simplified} entails a matrix inverse, for which the computational complexity scales as $\mathcal{O}(N_t^3)$. In massive MIMO configurations, such complexity can be prohibitive. To address this challenge, we leverage the matrix inversion lemma \cite{Petersen:08}, and introduce the following stacked matrices: 
$\bar{\mathbf{F}} = \left[\mathbf{F}_1, \dots ,\mathbf{F}_K\right] \!\in\! \mathbb{C}^{N_t\times KN_s}$,  $\bar{\mathbf{H}} = \left[\widetilde{\mathbf{H}}_1 \dots \widetilde{\mathbf{H}}_K\right] \!\in\!\mathbb{C}^{N_t\times KN_s}$, and $\bar{\mathbf{Q}} = \mathrm{blkdiag}\left(\mathbf{Q}_1, \dots , \mathbf{Q}_K\right)\in \mathbb{C}^{KN_s\times KN_s}$ where $\mathrm{blkdiag}\left(\cdot\right)$ represents the block diagonal matrix.
Using these, \eqref{eqn:MU-MIMO_optimal_structure_simplified} can be rewritten as 
\begin{align}
\bar{\mathbf{F}} = \gamma \! \left[ \beta\mathbf{I}_{N_t} + \bar{\mathbf{H}}\bar{\mathbf{Q}}\bar{\mathbf{H}}^H \right]^{-1} \bar{\mathbf{H}}\bar{\mathbf{Q}}.
\end{align}
Employing the matrix inversion lemma, we obtain
\begin{align}
\bar{\mathbf{F}} = \gamma \bar{\mathbf{H}}\bar{\mathbf{Q}}\left[ \beta\mathbf{I}_{KN_s} + \bar{\mathbf{H}}^H \bar{\mathbf{H}}\bar{\mathbf{Q}}\right]^{-1}.
\label{eqn:MU-MIMO_optimal_structure_manipulated}
\end{align}
Importantly, the computational complexity of this matrix inversion is of the order $\mathcal{O}((KN_s)^3)$, where $KN_s$ is less than or equal to $N_t$. In massive MIMO scenarios, $N_t/N_r \gg K$ and then $N_t \gg KN_s$, so the difference is significant.

Now, we propose a BS DNN incorporating the optimal structure for DL MU-MIMO precoding, as in \eqref{eqn:MU-MIMO_optimal_structure_manipulated}. The proposed MU-MIMO scheme is presented in Fig.~\ref{figure:proposed_mumimo_structure}. The network receives the UL signal matrices ${\mathbf{Y}_1^{\text{ul}}, \cdots, \mathbf{Y}_K^{\text{ul}}}$ as inputs and concatenates them into a single vector.

The subsequent BS DNN is composed of three specialized neural networks. The \textit{Effective Channel DNN} computes the set of effective channels ${\widetilde{\mathbf{H}}_1, \dots, \widetilde{\mathbf{H}}_K}$. Concurrently, the \textit{Weight Matrix DNN} outputs the corresponding weight matrices ${\mathbf{Q}_1,\cdots,\mathbf{Q}_K}$, and the \textit{Precoder Parameters DNN} determines the necessary precoder parameters ${\widetilde{\beta}, \gamma_1, \cdots, \gamma_K}$. 

These parameters are utilized to construct the preliminary precoding matrix $\bar{\mathbf{F}}'$ according to the optimal structure \eqref{eqn:MU-MIMO_optimal_structure_manipulated} as
\begin{align}
\bar{\mathbf{F}}' &= \left[\bar{\mathbf{H}}\bar{\mathbf{Q}} \left( \frac{\widetilde{\beta}}{\text{SNR}}\mathbf{I} +\bar{\mathbf{H}}^H\bar{\mathbf{H}}\bar{\mathbf{Q}} \right)^{-1} \right] = \left[ \mathbf{F}'_1, \cdots, \mathbf{F}'_K \right].
\end{align}
To explicitly optimize power allocation between UEs, we employ individual UE power allocation factors ${ \gamma_1, \cdots, \gamma_K }$ satisfying $\sum_{k=1}^K \gamma_k^2 = \mathcal{E}_s$. This results in the final precoder matrix $\bar{\mathbf{F}}$, where each UE's precoder is scaled individually as
\begin{align}
\bar{\mathbf{F}} =\left[ \mathbf{F}_1, \cdots, \mathbf{F}_K \right]=\left[ \gamma_1\frac{\mathbf{F}'_1}{\left\|\mathbf{F}'_1 \right\|_\mathrm{F}}, \cdots, \gamma_K \frac{\mathbf{F}'_K}{\left\|\mathbf{F}'_K \right\|_\mathrm{F}}\right].
\end{align}
Based on the optimal structure, this tailored approach for precoding ensures an optimized allocation of power across the network, adequately suppressing IUI.

However, we cannot guarantee that the output of the \textit{Effective Channel DNN} will converge to $\widetilde{\mathbf{H}}_k = \mathbf{H}_k \mathbf{W}_k^H$, and this limitation also applies to the other DNNs. Given our system's E2E training with the sum-rate loss and the unknown optimal values for each element, training these DNNs to produce outputs close to the optimal is challenging. Nonetheless, this structure provides an inductive bias, a concept validated in machine learning literature, such as in the dueling architecture of deep Q-networks \cite{Wang:16}.

In terms of the DNN architecture, we employed two fully connected (FC) layers for both the \textit{Weight Matrix DNN} and the \textit{Precoder Parameters DNN}, while the \textit{Effective Channel DNN} incorporates three FC layers due to its larger output dimension. Regarding the activation functions for the output layers, linear activation functions are used, except for the \textit{Precoder Parameters DNN}. For the \textit{Precoder Parameters DNN}, we apply linear activation for $\beta$ and softmax activation for ${\gamma_1, \cdots, \gamma_K}$, as $\gamma$ represents the allocated power ratio. Our design philosophy prioritizes simplicity in the communication system architecture, hence the choice not to use more complex structures such as CNNs, residual connections \cite{He:16}, Vision Transformers \cite{Dosovitskiy:20}, or other advanced deep learning structures \cite{Arvinte:23} — these could be considered in future work.

Training the E2E system for MU-MIMO closely mirrors the approach used for SU-MIMO. Thus, the entire system is trained E2E to maximize the sum-rate, utilizing SGD as outlined in \eqref{eqn:sgd}. However, it is important to note that the loss function in this context is focused on the sum-rate, rather than on channel capacity.

\section{Performance Analysis} \label{sec:Performance Analysis}
Having now defined and motivated our approach, we assess the performance of the proposed deep learning-enhanced feedback method, as well as the joint CSI acquisition and precoding scheme, through numerical examples for both TDD SU-MIMO and MU-MIMO scenarios. 
Throughout this section, we assume the DL noise variances at UEs to be identical, i.e., $\sigma_k^2 = \sigma^2, \forall k$, for simplicity. Similarly, letting $\tilde{\sigma}_k^2$ be the UL noise variance for UE $k$, we assume $\tilde{\sigma}_k^2 = \tilde{\sigma}^2, \forall k$. Then, the DL SNR and UL SNR are defined as ${\mathcal{E}_s}/{\sigma^2}$ and ${\mathcal{E}_p}/{\tilde{\sigma}^2}$, respectively.

\begin{table}[t]
\centering
\caption{Dataset Specifications}
\label{table:specifications}
\begin{tabular}{|c|c|}
\hline
\textbf{Parameter} \rule{0pt}{2.0ex} & \textbf{Value} \\ 
\hline
\hline
Cell type & Single cell \\
\hline
Cell radius & 100 m \\
\hline
BS position & (0, 0, 25) m \\
\hline
Channel model & UMa in TR 38.901 \cite{ETSI:20} \\
\hline
Carrier frequency & 7 GHz \\
\hline
Simulation Bandwidth & 20 MHz \\
\hline
Subcarrier spacing & 30 kHz \\
\hline
Resource blocks (RBs) & 52 \\
\hline
Subcarriers per RB & 12 \\
\hline
\end{tabular}
\end{table}

\subsection{Dataset Specifications}
We utilized two distinct datasets for our analysis: Nokia's private mid-band dataset and a dataset created by the QuaDRiGa simulator \cite{Jae:21}. The key specifications for both of these datasets are detailed in Table \ref{table:specifications}. We assumed that the resource block (RB) scheduling is predetermined, and as a result, random RB's channel data is leveraged for both training and testing. 
For the antenna setups, the Nokia BS array is equipped with 32 virtual ports, where the number 32 corresponds to $2 \times 8 \times 2 = \text{rows} \times \text{columns} \times \text{polarizations}$. Similarly, the Nokia UE array has 4 virtual ports, calculated as $1 \times 2 \times 2$. For the QuaDRiGa's setup, instead of considering virtual ports, we directly focused on physical antennas. Specifically, the QuaDRiGa BS is equipped with 32 antennas, determined by the configuration $4 \times 8 \times 1$, and the UE features 4 antennas, configured as $1 \times 4 \times 1$. Since a narrowband is assumed, each generated $\mathbf{H}_k \in\mathbb{C}^{32\times4}$ corresponds to a channel matrix for a specific RB.

For the creation of our training and testing samples, we started with 20,000 initial UE realizations. Out of these, we randomly selected 18,000 UEs for the training set, with the remaining 2,000 UEs forming the test set. In the SU-MIMO scenario, from a total of 936,000 channel samples (calculated as $18,000~\text{UEs} \times 52~ \text{RBs}$), we randomly chose 50,000 for training and 2,000 for testing.
For MU-MIMO samples, we created each one by picking four UE indices and one RB index at random. This approach allowed us to combine four channel samples into a single training or test sample. We kept the number of training samples the same as in the SU-MIMO case, at 50,000, and the test samples at 2,000.

\subsection{Baseline Methods}
\subsubsection{UL channel estimation}
We focus on two baseline methods for UL channel estimation--the linear minimum mean square error (LMMSE) estimation and the regularized least square (RLS) estimation. For the both LMMSE and RLS channel estimation methods, the BS should know what $\mathbf{P}$ was sent from the UE whereas the BS does not need to know $\mathbf{P}$ in our proposed method.

\textbf{LMMSE method}: Starting with the UL pilot signal representation $\mathbf{Y}=\mathbf{H}\mathbf{P}+\mathbf{N}$,
we first vectorize it using a similar technique described in Section V of \cite{Heath:16} as
\begin{align}
\mathrm{vec}(\mathbf{Y})=({\mathbf{P}^T}\otimes\mathbf{I}_{N_t})\mathrm{vec}(\mathbf{H}) +\mathrm{vec}(\mathbf{N}), 
\label{eqn:vectorized_UL}
\end{align}
where $\otimes$ indicates the Kronecker product operation.
To simplify notation, define $\tilde{\mathbf{y}}$ = $\mathrm{vec}(\mathbf{Y}) \in \mathbb{C}^{N_tN_p}$, $\tilde{\mathbf{P}} = {\mathbf{P}^T}\otimes\mathbf{I}_{N_t}\in \mathbb{C}^{N_tN_p \times N_tN_r}$, $\tilde{\mathbf{h}} =\mathrm{vec}(\mathbf{H}) \in \mathbb{C}^{N_tN_r}$, and $\tilde{\mathbf{n}} = \mathrm{vec}(\mathbf{N})\in \mathbb{C}^{N_tN_p}$. Utilizing these definitions, \eqref{eqn:vectorized_UL} simplifies to $
\tilde{\mathbf{y}}=\tilde{\mathbf{P}}\tilde{\mathbf{h}}+\tilde{\mathbf{n}}$.
Then, according to \cite{Kay:10}, the LMMSE estimate for $\tilde{\mathbf{h}}$ can be expressed as
\begin{align}
\tilde{\mathbf{h}}_\mathrm{MMSE}=\mathbf{A}^H_\mathrm{MMSE}(\tilde{\mathbf{y}}-\mathbb{E}[\tilde{\mathbf{y}}]) + \mathbb{E}[\tilde{\mathbf{h}}]
\end{align}
where the matrix $\mathbf{A}^H_\mathrm{MMSE}$ is given by
\begin{align}
\mathbf{A}^H_\mathrm{MMSE} = {\mathbf{C}_{\tilde{\mathbf{h}}}}\tilde{\mathbf{P}}^H(\tilde{\mathbf{P}}{\mathbf{C}_{\tilde{\mathbf{h}}}}\tilde{\mathbf{P}}^H + \mathbf{C}_{\tilde{\mathbf{n}}})^{-1}.
\label{eqn:lmmse_channelestimation}
\end{align}
Here, ${\mathbf{C}_{\tilde{\mathbf{h}}}} = \mathbb{E}[\tilde{\mathbf{h}}\tilde{\mathbf{h}}^H]$ and ${\mathbf{C}_{\tilde{\mathbf{n}}}} = \mathbb{E}[\tilde{\mathbf{n}}\tilde{\mathbf{n}}^H]=\tilde{\sigma}^2\mathbf{I}_{N_tN_p}$.
Finally, we can obtain the estimate of the UL channel $\mathbf{H}$ by devectorizing $\tilde{\mathbf{h}}_\mathrm{MMSE}$.

Note that for the implementation of the LMMSE channel estimation, the BS must be aware of the channel statistics, specifically, ${\mathbf{C}_{\tilde{\mathbf{h}}}}$ and $\mathbb{E}[\tilde{\mathbf{h}}]$, which are challenging to obtain in practice. Thus, we consider the LMMSE method as a strong baseline that requires more statistical information and even the knowledge of the pilot matrix at the BS, in contrast to our proposed method.

\textbf{RLS method}: In contrast to the LMMSE estimation, the RLS method does not require knowledge of channel statistics. For the UL pilot signal $\mathbf{Y}=\mathbf{H}\mathbf{P}+\mathbf{N}$, the RLS channel estimate $\hat{{\mathbf{H}}}_{\text{RLS}}$ is obtained as
\begin{align}
\hat{{\mathbf{H}}}_{\text{RLS}} = \mathbf{Y}{\mathbf{P}^H} \left( \mathbf{P}{\mathbf{P}^H} + \delta\mathbf{I}_{N_p}\right)^{-1},
\label{eqn:RLS}
\end{align}
where $\delta$ is set to $\tilde{\sigma}^2$. Since \eqref{eqn:RLS} includes the regularization term, there always exists the inverse regardless of the rank of $\mathbf{P}$. In our simulations, the RLS method is considered as a practical baseline.

\subsubsection{UL pilot signals}
We can consider two different types of pilots as baseline methods: the Walsh code-based pilot and the SVD-based pilot.

\textbf{Walsh code-base pilot}:
The Walsh code-based pilot, characterized by its orthogonality, employs a real-valued length 4 Walsh code. 

\textbf{SVD-based pilot}:
The SVD-based pilots are derived from Proposition~\ref{prop:1} and utilize the $N_p$ right singular vectors of the UL channel matrix $\mathbf{H}$ in descending order. As with the use of SVD for low-rank approximation, employing the right singular vectors as pilots captures the significant components of the given channel matrix, ensuring the orthogonality of the pilots. 

\subsubsection{DL precoding schemes}
Based on the estimated channel matrix, the baseline method employs \textbf{SVD-based beamforming (BF)} combined with WF power allocation, which is known to achieve capacity in the SU-MIMO scenario. To be specific, the right singular vectors of the estimated DL channel matrix $\mathbf{H}^H$ are used for precoding with WF power allocation. 

For MU-MIMO, two precoding methods are considered as benchmarks: \textbf{WMMSE BF} with matched filter initialization and 20 iterations \cite{Christensen:08}, and \textbf{block diagonalization (BD) BF} with per-user WF power allocation \cite{Spencer:04}. The WMMSE method, well-known for achieving near-optimal achievable sum-rate performance in linear processing, is iterative and requires extensive matrix computations. Therefore, the BD method is employed as an additional benchmark, offering a more practical approach.

\subsection{Performance Comparison to Baseline Methods}

\begin{figure}
\centering
\begin{subfigure}[t]{\linewidth}
    \centering
    \includegraphics[width=\linewidth]{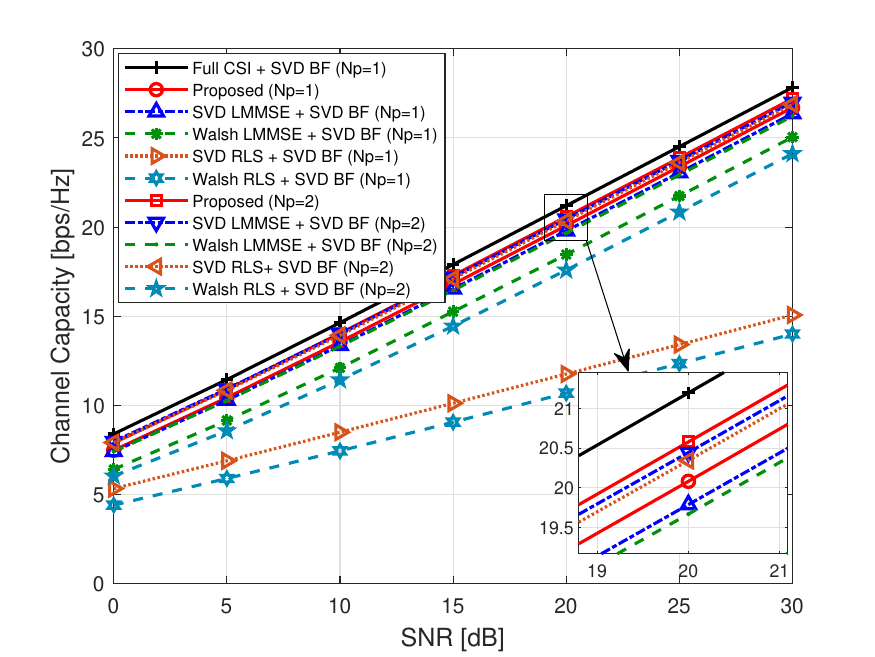}
    \caption{Using Nokia's mid-band dataset.}
    \label{figure:DL_channelcapacity_nokia}
\end{subfigure}
\newline  
\begin{subfigure}[t]{\linewidth}
    \centering
    \includegraphics[width=\linewidth]{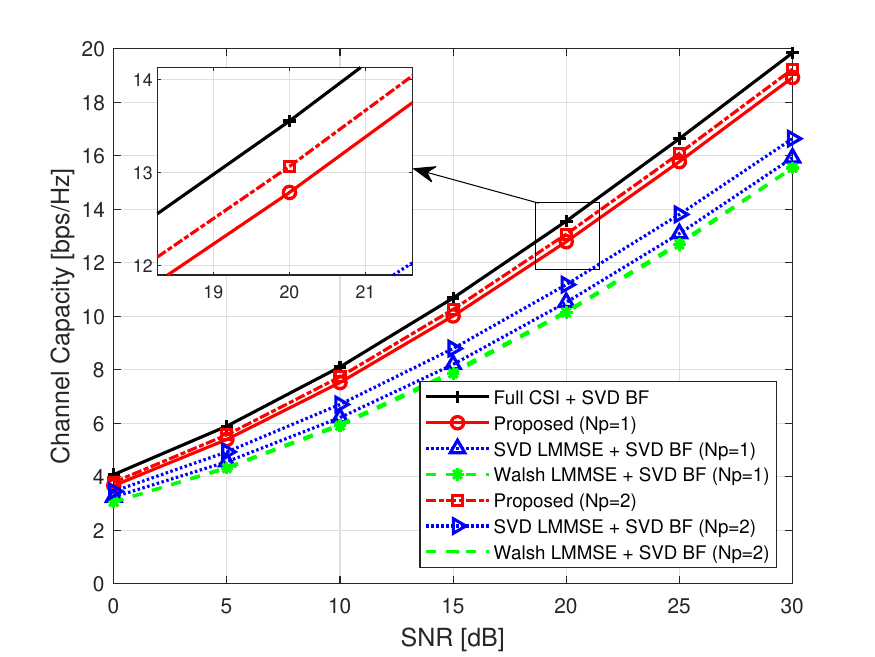}
    \caption{Using the QuaDRiGa mid-band dataset.}
    \label{figure:DL_channelcapacity_quadriga}
\end{subfigure}
\caption{DL SU-MIMO channel capacity performance versus DL SNR for $ N_p=\{1, 2\} $ and $ N_s=2 $ with fixed UL $ \text{SNR}=10\, \text{dB} $.}
\label{figure:DL_channelcapacity_comparison}
\end{figure}

\subsubsection{SU-MIMO DL channel capacity performance}
Fig.~\ref{figure:DL_channelcapacity_nokia} shows the channel capacity performance of the proposed SU-MIMO scheme against baseline methods using Nokia's mid-band dataset, with the UL SNR held constant at 10 dB. In both $N_p = \{1, 2\}$ cases, the proposed scheme exhibits a marginal performance degradation of approximately 1 dB from ``Full CSI + SVD BF,'' which is the performance upper bound achieved with optimal precoding based on the true channel. For $N_p=1$, the RLS channel estimation-based baselines show less steep slopes due to the rank deficiency of the pilot matrix.
Unlike the LMMSE-based methods that deal with this with additional information ${\mathbf{C}_{\tilde{\mathbf{h}}}}$ and $\mathbb{E}[\tilde{\mathbf{h}}]$, our proposed method solves the problem by compressing the channel matrix into a lower dimension. Also, the proposed scheme slightly outperforms those LMMSE-based strong baselines in the $N_p = \{1, 2\}$ cases, without requiring pilot matrix feedback—a significant advantage over the baseline methods. This demonstrates the efficacy of the novel deep learning-based UL CSI feedback in TDD mode, combined with implicit CSI acquisition and SU-MIMO precoding.

The effectiveness of the proposed scheme is further evidenced in Fig.~\ref{figure:DL_channelcapacity_quadriga}, where the channel capacity performance using the QuaDRiGA mid-band dataset is analyzed. As observed in Fig.~\ref{figure:DL_channelcapacity_nokia}, the proposed method shows only a minor performance loss compared to the true channel-based upper bound. Nonetheless, the performance gap between the proposed method and the other LMMSE-based benchmarks becomes larger as the DL SNR increases.

\begin{figure}
\begin{center}
\includegraphics[width=\linewidth]{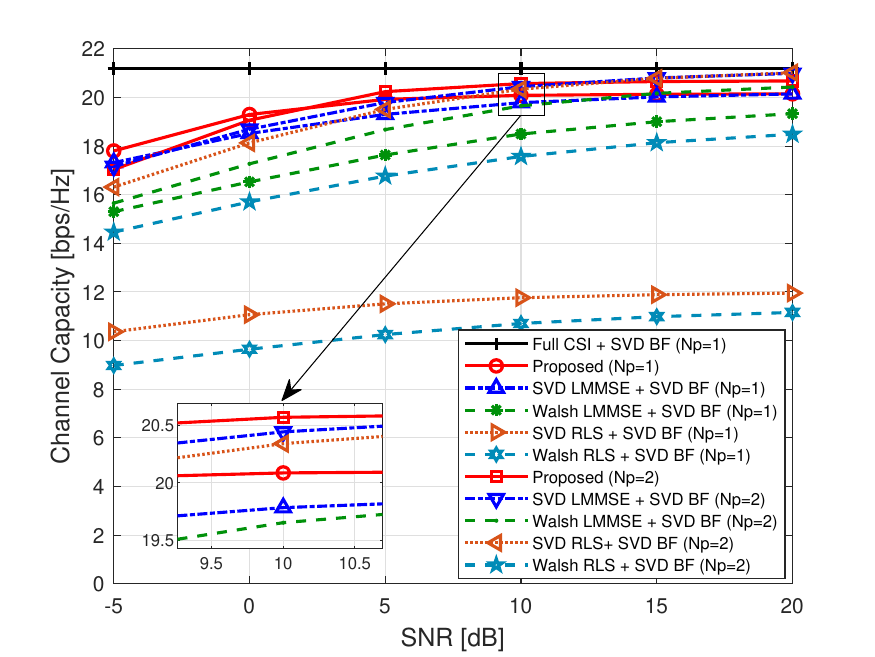}
\end{center}
\caption{DL SU-MIMO channel capacity performance versus UL SNR for $N_p=1, 2$ and $N_s=2$ with fixed DL $\text{SNR}=20\, \text{dB}$ using Nokia mid-band dataset.}
\label{figure:UL_channelcapacity_comparison_nokia}
\end{figure}

\subsubsection{Impact of UL SNR on SU-MIMO DL channel capacity} 
With the DL SNR set to 20 dB, the impact of varying UL SNR on DL channel capacity using the Nokia dataset is presented in Fig.~\ref{figure:UL_channelcapacity_comparison_nokia}. The proposed method generally surpasses the baseline approaches, and the performance gap is obvious for the $N_p=1$ case. However, beyond an UL SNR of 15 dB in the $N_p=2$ case, ``SVD LMMSE + SVD BF'' and ``SVD RLS + SVD BF'' slightly exceed the proposed scheme.

\begin{figure}
\centering
\begin{subfigure}[t]{\linewidth}
    \includegraphics[width=\linewidth]{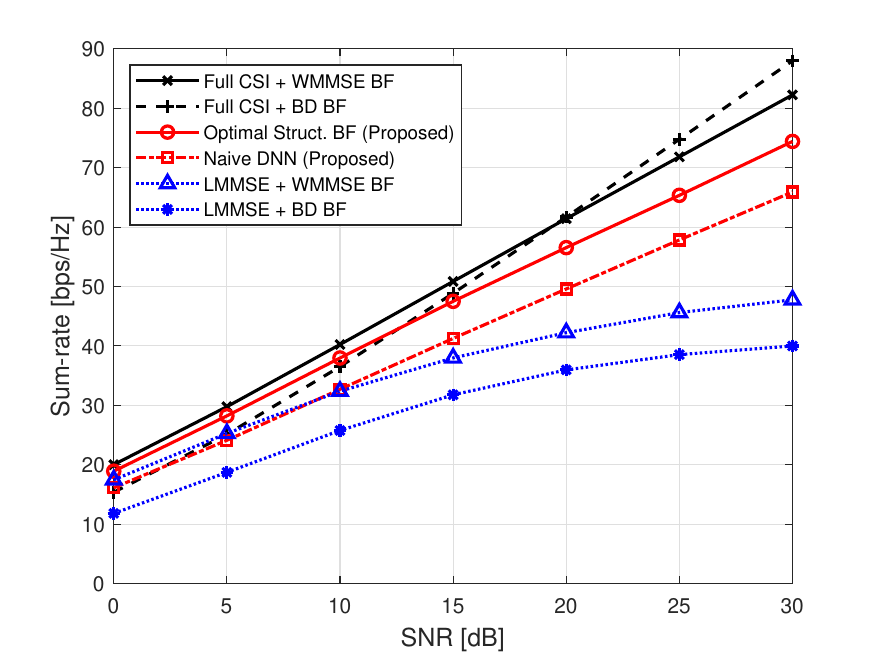}
    \caption{Using Nokia's mid-band dataset.}
    \label{figure:DL_sum-rate_nokia}
\end{subfigure}
\newline
\begin{subfigure}[t]{\linewidth}
    \includegraphics[width=\linewidth]{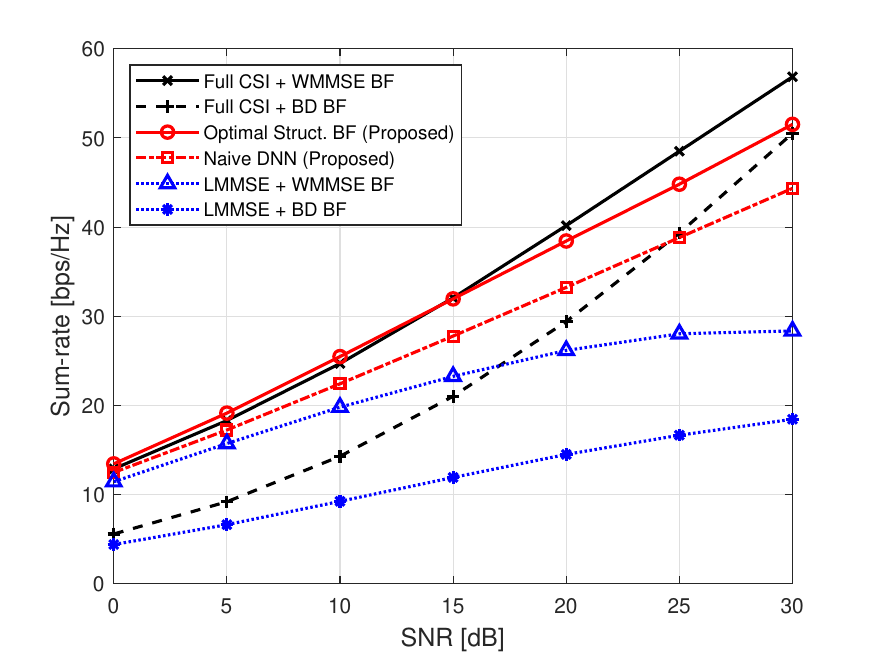}
    \caption{Using the QuaDRiGa mid-band dataset.}
    \label{figure:DL_sum-rate_quadriga}
\end{subfigure}
\caption{DL MU-MIMO sum-rate performance versus DL SNR for $ K=4 $, $ N_s=2 $, and $ N_p=1 $ with fixed UL $ \text{SNR}=10\, \text{dB} $.}
\label{fig:DL_sum-rate_comparison}
\end{figure}

\subsubsection{MU-MIMO DL sum-rate performance} 
Fig.~\ref{figure:DL_sum-rate_nokia} illustrates the DL sum-rate performance of our two proposed methods against baseline methods, using Nokia's mid-band dataset with the UL SNR fixed at 10 dB. We set the number of users $K=4$, and each UE employs $N_s=2$ streams with $N_p=1$. 
Hereafter, our analysis will focus on SVD-based pilots and compare them against LMMSE channel estimation-based methods as our primary benchmarks. 
Initially, the ``Optimal Struc. BF'' method demonstrates superior performance over the ``Full CSI + BD BF'', especially as the DL SNR increases. 
This indicates that incorporating an optimal structure-based approach into the BS DNN module provides a beneficial inductive bias and highlights the significance of such structure-based methods over naive implementations for DNN training in sum-rate maximization.
In terms of overall performance, both of our methods outperform the ``LMMSE + WMMSE BF'' and ``LMMSE + BD BF'' across all SNR ranges, except for ``Naive DNN'' at SNRs below 10 dB. Notably, ``Optimal Struc. BF'' even exceeds the ``Full CSI + BD BF'' at lower SNRs, around 10 dB. While ``Optimal Struc. BF'' does experience some performance degradation relative to Full CSI-based methods at higher SNRs, this decrease is moderate and gradual.

The DL sum-rate performance is further evaluated using the QuaDRiGa dataset in Fig.~\ref{figure:DL_sum-rate_quadriga}. While the overall trend is consistent with previous observations, ``Full CSI + BD BF'' exhibits poor performance at low SNRs. This is mainly due to the spatial consistency of channel samples in the dataset.
Given that the BD algorithm does not consider power allocation across users, it underperforms in such conditions. In contrast, this implies that the additionally introduced power allocation variables in our proposed method are effective. Note that while WMMSE BF is near-optimal, it is not the exact optimum for maximizing MU-MIMO sum-rate. This allows the ``Optimal Struc. BF'' to slightly exceed ``Full CSI + WMMSE BF'' at DL SNRs below 15 dB.

\begin{figure}
\begin{center}
\includegraphics[width=\linewidth]{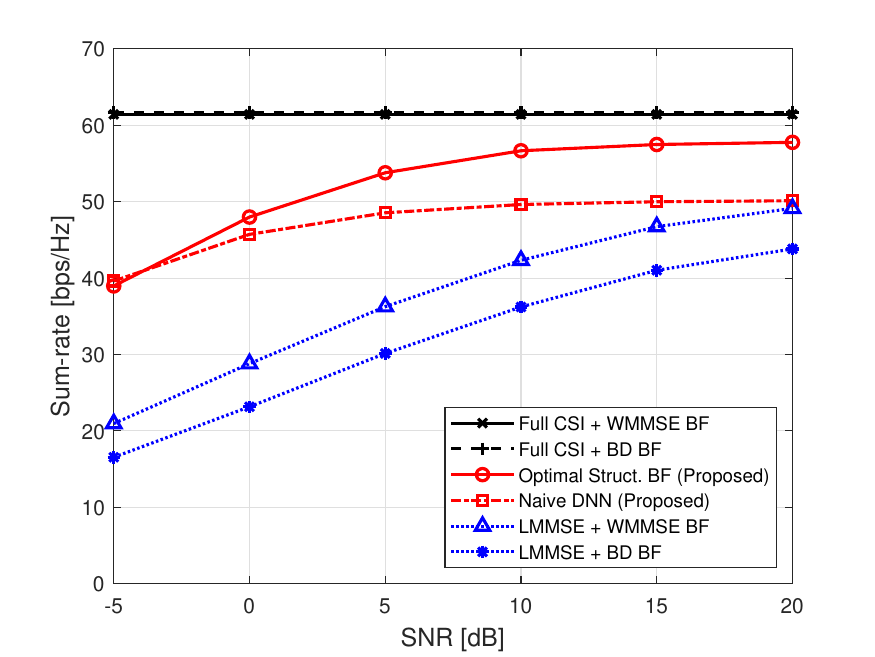}
\end{center}
\caption{DL MU-MIMO sum-rate performance versus UL SNR for $K=4$, $N_s=2$, and $N_p=1$ with fixed DL $\text{SNR}=20\, \text{dB}$ using Nokia's mid-band dataset.}
\label{figure:UL_sum-rate_comparison_nokia}
\end{figure}

\subsubsection{Impact of UL SNR on MU-MIMO DL sum-rate} 
Fig.~\ref{figure:UL_sum-rate_comparison_nokia} depicts the DL sum-rate performance of the proposed methods versus UL SNR with DL SNR set to 10 dB. Compared to the SU-MIMO case shown in Fig.~\ref{figure:UL_channelcapacity_comparison_nokia}, there is a significant performance gap between our proposed methods and the baseline approaches, and this is particularly notable at lower SNRs. 
Our proposed method shows graceful degradation with decreasing UL SNR. This demonstrates the robustness of our E2E TDD-based CSI acquisition and MU-MIMO precoding to UL noise.

\begin{figure}
    \centering
    \begin{minipage}[b]{\linewidth}
        \centering
        \includegraphics[width=0.78\linewidth]{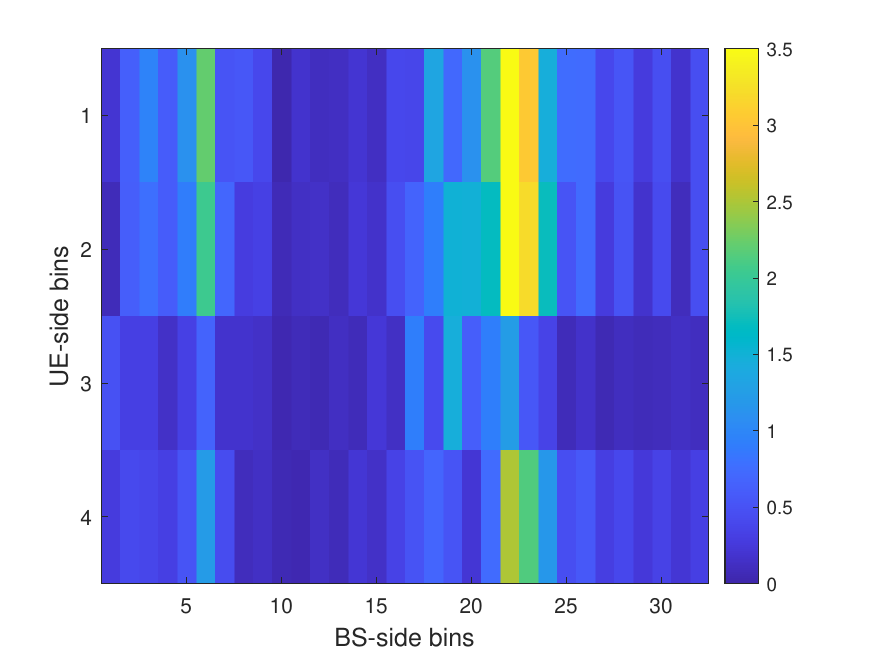}
        \\ \footnotesize(a) Channel matrix sample $\mathbf{H}$.
        \label{subfig:H_ang}
        \vspace{0.2cm}
    \end{minipage}
    \begin{minipage}[b]{\linewidth}
        \centering
        \hspace*{-0.85cm} 
        \includegraphics[width=0.70\linewidth]{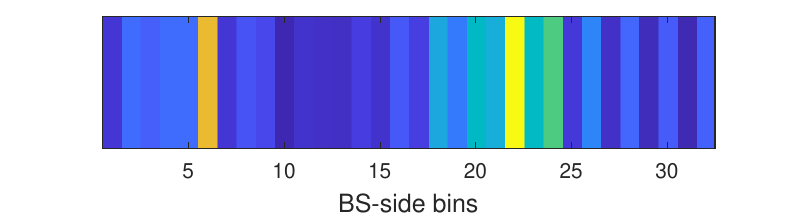}
        \\ \footnotesize(b) The compressed channel $\mathbf{H}\mathbf{p}$.
        \label{subfig:Hp}
    \end{minipage}
    \begin{minipage}[b]{\linewidth}
        \centering
        \hspace*{-0.85cm} 
        \includegraphics[width=0.70\linewidth]{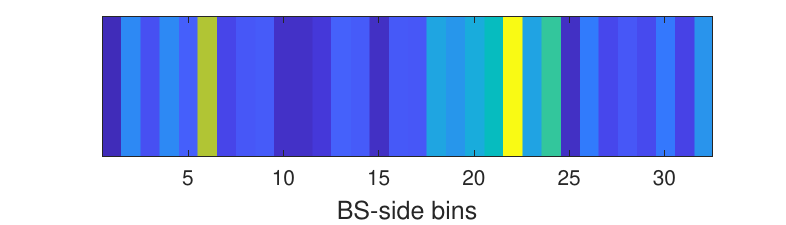}
        \\ \footnotesize(c) UL signal $\mathbf{y}^\mathrm{ul} = \mathbf{H}\mathbf{p} + \mathbf{n}^\mathrm{ul}$ (SNR = 10 dB).
        \label{subfig:y}
    \end{minipage}
    \caption{Comparison of the magnitude of the angular domain channels: the original channel matrix and its compressed versions.}
    \label{figure:compression_comparison}
\end{figure}

\begin{figure}
\begin{center}
\includegraphics[width=\linewidth]{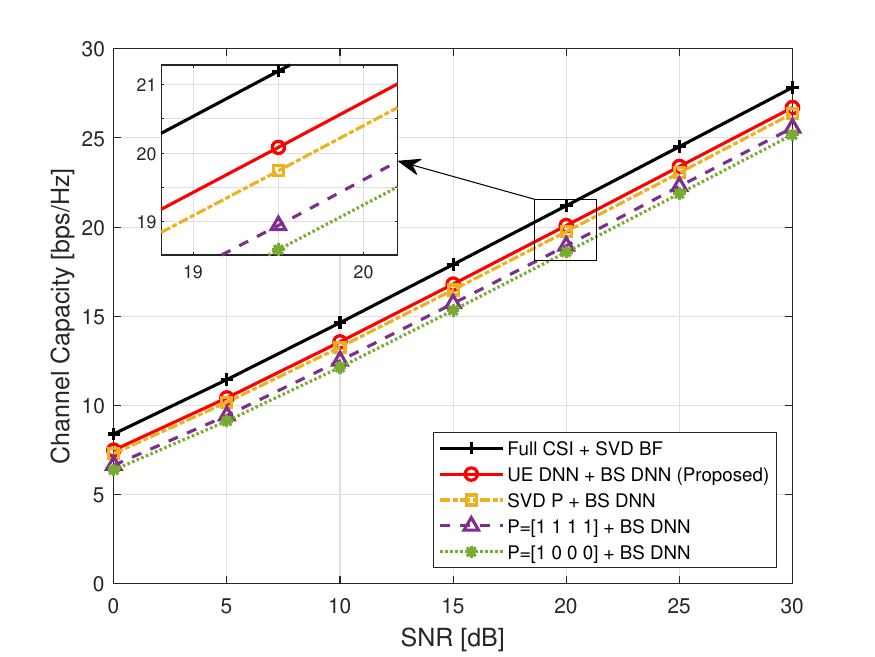}
\end{center}
\caption{DL SU-MIMO channel capacity performance versus DL SNR for $N_s = 2$ and $N_p = 1$ across different $\mathbf{P}$ matrices utilizing Nokia’s mid-band dataset.}
\label{figure:P_comparison_nokia}
\end{figure}

\begin{figure}
    \centering
    \begin{minipage}[t]{\linewidth}
        \centering
        \includegraphics[width=0.7\linewidth]{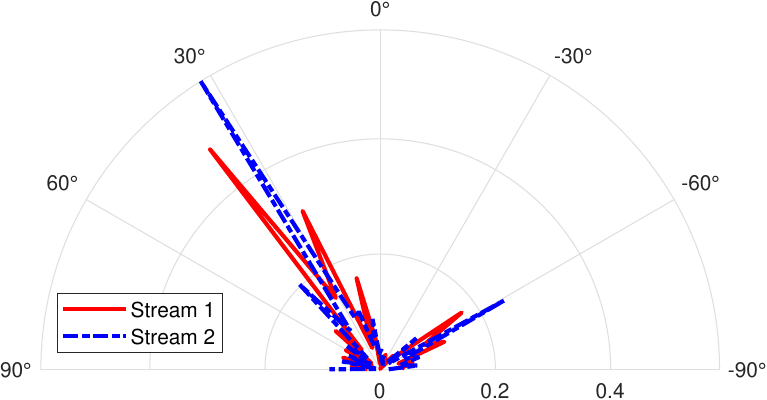}
        \\ \footnotesize(a) Beam pattern of the proposed method.
    \end{minipage}
    \newline
    \vspace{0.5cm} 
    \hfill
    \begin{minipage}[t]{\linewidth}
        \centering
        \includegraphics[width=0.7\linewidth]{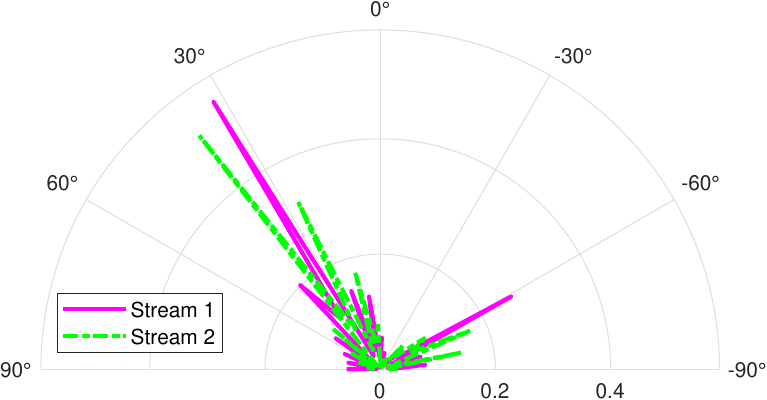}
        \\ \footnotesize(b) Optimal beam pattern.
    \end{minipage}

    \caption{Comparison of beam patterns for both the proposed and optimal methods, given an UL SNR of 10 dB and a DL SNR of 20 dB.}
    \label{figure:beampattern_comparison}
\end{figure}

\subsection{Learned channel-adaptive pilot and joint estimation and precoding mechanism}
Fig.~\ref{figure:compression_comparison} illustrates the noise-robust linear compression of the channel matrix $\mathbf{H}$ by the learned channel-adaptive pilot $\mathbf{p}$ which indicates the pilot with $N_p=1$. Fig.~\ref{figure:compression_comparison}(a) visualizes the magnitude of an angular domain channel sample from the Nokia dataset, also referred to as the virtual channel representation \cite{Sayeed:02}. The angular domain is constructed by applying array response vectors to $\mathbf{H}$, with the angles corresponding to these vectors being equally spaced in terms of their arcsine values from $-\frac{\pi}{2}$ to $\frac{\pi}{2}$. The number of bins on both the BS and UE sides is equivalent to the number of antennas. With a single pilot ($N_p=1$), the learned pilot $\mathbf{p}$ projects $\mathbf{H}$ onto a one-dimensional space within the angular domain, as shown in Fig.~\ref{figure:compression_comparison}(b). Since the channel matrix elements are complex-valued, this one-dimensional projection within the angular domain encompasses more than a mere summation of elements in each column, resulting in what can be termed a \emph{constructive angular domain projection}. Interestingly, this angular domain projection is trained without any exposure to angular domain information.
Moreover, due to the unitary invariance of noise variance, this type of projection improves noise robustness, as clearly demonstrated in Fig.~\ref{figure:compression_comparison}(c).

Furthermore, Fig.~\ref{figure:P_comparison_nokia} demonstrates the superiority of the proposed channel-adaptive pilot over other potential pilots in terms of DL channel capacity. Except for the upper bound, all other compared methods employ the BS DNN to assess the effectiveness of the pilot $\mathbf{p}$.
Since our channel-adaptive pilot performs angular domain projection, our proposed method outperforms ``$P=[1~ 1~ 1~ 1]$ + BS DNN.'' It also exceeds the performance of ``SVD P + BS DNN,'' where the SVD-based pilot matrix is a good choice for low-rank approximation.

Upon receiving this compressed UL pilot signal, the BS DNN directly computes the precoder. An example of this learned joint CSI acquisition and precoding for SU-MIMO is depicted in Fig.~\ref{figure:beampattern_comparison}, using the same channel sample $\mathbf{H}$ as in Fig.~\ref{figure:compression_comparison}. Fig.~\ref{figure:beampattern_comparison}(a) illustrates the learned beam pattern from the proposed method, and Fig.~\ref{figure:beampattern_comparison}(b) shows the optimal beam pattern derived by SVD-based precoding with WF power allocation. The two beam patterns are similar, with beam directions toward the strong angle components of the channels in Fig.~\ref{figure:compression_comparison}; however, the stream order differs as pointed out in Remark 1. While SVD is typically performed in descending order of singular values, the order does not impact the optimality of the precoder as long as the right singular vectors corresponding to the largest $N_s$ singular values are selected.



\begin{figure}
\centering
\begin{subfigure}[b]{\linewidth}
    \includegraphics[width=\linewidth]{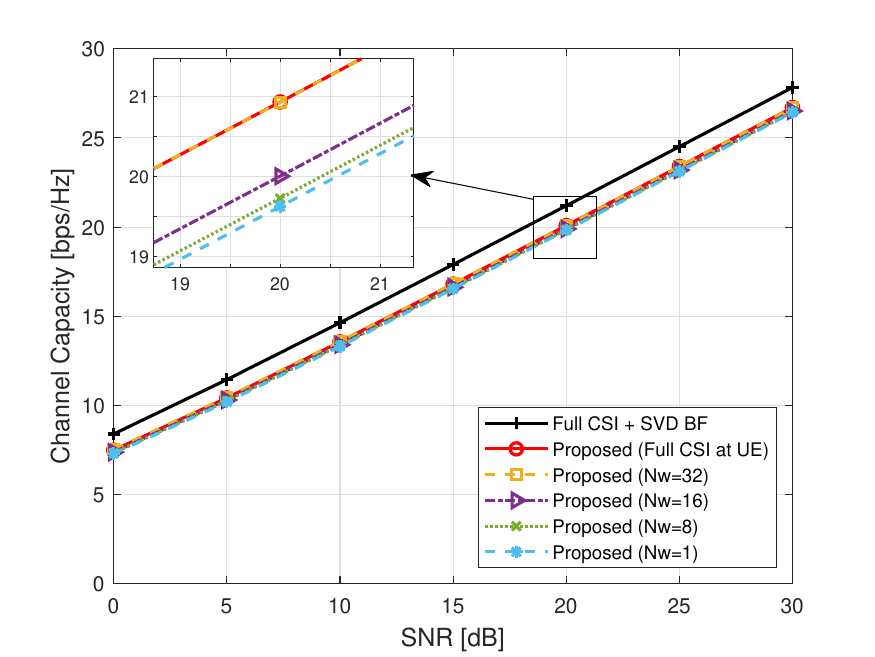}
    \caption{SU-MIMO}
    \label{figure:DL_capacity_Nw_nokia}
\end{subfigure}
\newline
\begin{subfigure}[b]{\linewidth}
    \includegraphics[width=\linewidth]{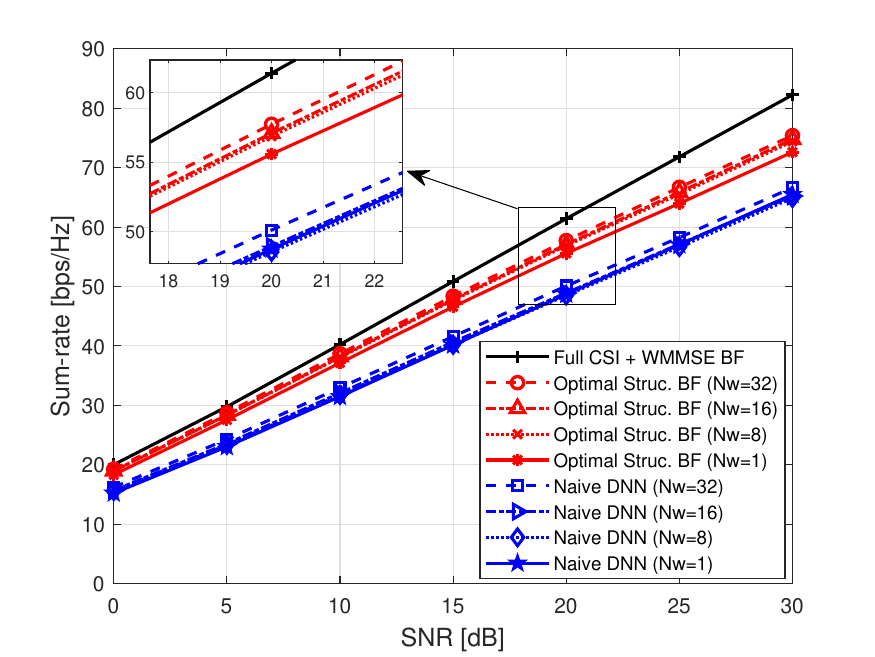}
    \caption{MU-MIMO}
    \label{figure:DL_sum_rate_Nw_nokia}
\end{subfigure}
\caption{DL SU-MIMO channel capacity and MU-MIMO sum-rate performance comparison over different numbers of probing beams, versus UL SNR for $N_s=2$ and $N_p=1$ with a fixed UL $\text{SNR}=10\, \text{dB}$, using Nokia's mid-band dataset.}

\label{fig:performance_comparison_Nw_nokia}
\end{figure}

\begin{figure}
\centering
\begin{subfigure}[b]{\linewidth}
    \centering
    \includegraphics[width=0.78\linewidth]{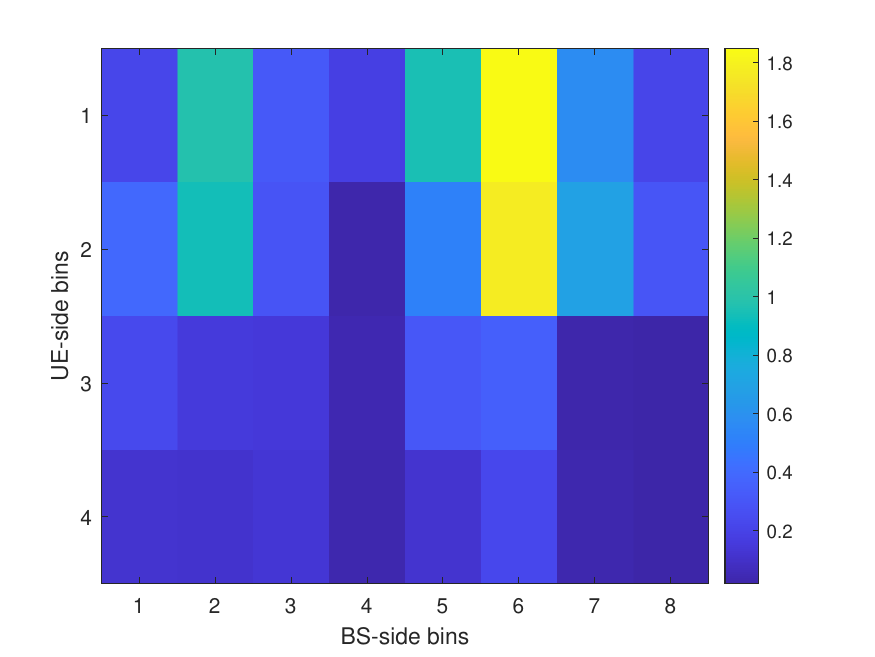}
    \caption{$N_w=8$}
    \label{subfig:DL_capacity_comparison_Nw_nokia}
\end{subfigure}
\newline
\begin{subfigure}[b]{\linewidth}
    \centering
    \includegraphics[width=0.78\linewidth]{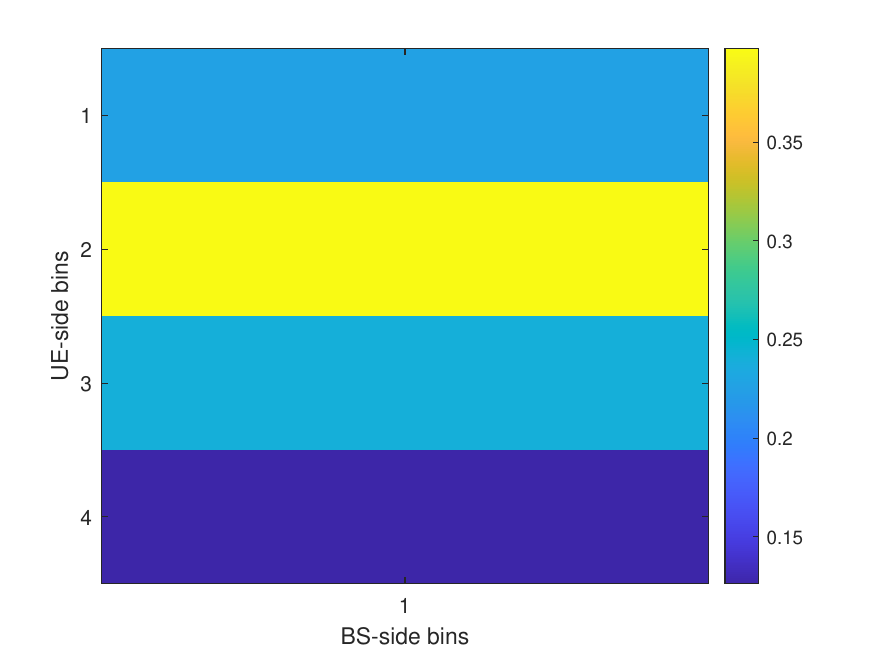}
    \caption{$N_w=1$}
    \label{subfig:DL_sum_rate_comparison_Nw_nokia}
\end{subfigure}

\caption{Comparison of noiseless angular domain representations for probing beam results with $N_w=\{1,8\}$. This illustrates the information loss for the linear compression into a lower dimension from the UE-side, as determined by the number of BS-side angular bins, which corresponds to the number of probing beams.}

\label{figure:probingbeam_CSI}
\end{figure}

\subsection{Imperfect channel knowledge at UE}
In this subsection, we relax the assumption of full channel knowledge at the UE to consider DL probing beam results instead. The received DL probing signals at UE $k$, denoted by ${\mathbf{Y}_k^{\text{prob}} \in \mathbb{C}^{N_r \times N_w}}$ is defined as
\begin{align}
{{\mathbf{Y}}^{\text{prob}}_k} = {\mathbf{H}}_k {\mathbf{A}}^{\text{prob}} + {\mathbf{N}}_k^{\text{prob}},
\label{eqn:dl_probmodel}
\end{align}
where ${\mathbf{A}}^{\text{prob}}$ indicates the discrete Fourier transform matrix with $N_w$ equally spaced probing beams. 
Instead of relying on full DL CSI at the UE as in \eqref{eqn:pilot_generation}, the UE DNN now leverages $\mathbf{Y}_k^{\text{prob}}$ as input, i.e.,
\begin{align}
{\mathbf{P}}_k = f_{\text{UE}} \left({{\mathbf{Y}}^{\text{prob}}_k} \right).
\label{eqn:pilot_generation_with_imperfect_csi}
\end{align}
Except for the imperfect CSI at each UE as \eqref{eqn:pilot_generation_with_imperfect_csi}, all other aspects of the proposed methods remain unchanged.

Fig.~\ref{figure:probingbeam_CSI}(a) and Fig.~\ref{figure:probingbeam_CSI}(b) illustrate the SU-MIMO channel capacity and MU-MIMO sum-rate performance against the DL SNR, respectively, for different numbers of probing beams $N_w$, all set with a probing SNR of 10 dB. When $N_w=32$, the performance with relaxed channel knowledge at the UE closely approximates that of the scenario with full CSI at the UE in both SU-MIMO and MU-MIMO cases. However, as $N_w$ is reduced, there is a gradual performance decline. Interestingly, only $N_w=1$ shows minor performance degradation, even though, in the MU-MIMO case, the loss is approximately 1 dB. 

This can be further understood by examining the information retained by the probing beam results. Fig.~\ref{figure:probingbeam_CSI} displays the angular domain representations of probing beam results with $N_w = 8$ and $N_w = 1$. Fig.~\ref{figure:probingbeam_CSI}(a) shows that with $N_w = 8$, the original channel information, as seen in Fig.~\ref{figure:compression_comparison}(a), is largely maintained, resulting in no significant capacity performance degradation for either SU-MIMO or MU-MIMO. In contrast, Fig.~\ref{figure:probingbeam_CSI}(b) presents a notable observation. Despite the channel information being significantly condensed with $N_w = 1$, the critical peak information is still conveyed in the UE-side bins, which may be beneficial for compressing the original channel matrix into one dimension. Nonetheless, some information about the peak in the angular domain might be lost. For example, when comparing UE-side bins 1 and 3, bin 1 contains larger channel elements than bin 3. Yet, this relationship is inverted when $N_w=1$, as depicted in Fig.~\ref{figure:probingbeam_CSI}(b).

\section{Conclusion} \label{sec:conclusion}
In this paper, we have developed E2E deep learning methods for both SU-MIMO and MU-MIMO systems, addressing the challenges posed by pilot overhead in TDD-based MIMO systems. We have successfully demonstrated a substantial reduction in UL pilot overhead while maintaining system performance. Especially, we have introduced the channel-adaptive pilot concept that linearly compresses the UL channel, while retaining only the essential information necessary for effective precoding. Furthermore, we have developed the joint CSI acquisition and precoding modules leveraging DNNs dedicated to TDD systems. Combined with the UE DNN for channel-adaptive pilot transmission, the capability of the BS DNN to process noisy, compressed UL pilots and produce precise precoders effectively highlights the potential of E2E deep learning in enhancing the performance of MIMO systems. Application to realistic upper mid-band datasets validates the practicality of these methods and also demonstrates their superiority over traditional approaches, even LMMSE channel estimation, which is based on knowledge of the statistics of the channel matrix. 

This study's findings open new avenues for future research, particularly in exploring the broader applications of channel-adaptive pilots. Given their compatibility with TDD mode, these pilots hold promise for a variety of scenarios, such as cell-free MIMO \cite{Ngo:17}. Additionally, extending our method to include receive combining and transmit precoding with hybrid architectures, prevalent in current massive MIMO systems, could enhance the practicality of our proposed approach.

\begin{appendices}
\section{Proof of Proposition~\ref{prop:1}}
\label{app:proof_prop1}
Assume $N_p \geq N_s$ and $N_p < N_r$. Given an UL channel $\mathbf{H} \in \mathbb{C}^{N_t \times N_r}$ at UE, using the singular value decomposition (SVD) yields $\mathbf{H}=\mathbf{U} \boldsymbol{\Sigma} \mathbf{V}^H$, where the singular values, denoted as $\mu_1, \mu_2, \ldots, \mu_{N_r}$, are sorted in descending order.

Choosing a pilot matrix $\mathbf{P} \in \mathbb{C}^{N_r \times N_p}\!$ as 
$\mathbf{P} = \alpha \left[\mathbf{v}_1 ,\ldots ,\mathbf{v}_{N_p}\right],$
where $\alpha$ is the power normalization parameter to satisfy $\text{Tr}({\mathbf{P} \mathbf{P}^H}) \leq \mathcal{E}_p$, the noiseless received UL analog feedback signal becomes
\begin{align}
\mathbf{H P} 
& =\mathbf{U} \boldsymbol{\Sigma} \mathbf{V}^H \mathbf{P} \\
& =\left[\mathbf{u}_1 ,\ldots, \mathbf{u}_{N_p}\right]\left[\begin{array}{lll}
\alpha \mu_1 & & \\
& \ddots & \\
& & \alpha \mu_{N_p}
\end{array}\right] \\
&= \left[\alpha \mu_1 \mathbf{u}_1 ,\ldots, \alpha \mu_{N_p} \mathbf{u}_{N_p}\right].
\end{align}

Notice that for the DL channel $\mathbf{H}^H=\mathbf{V} \boldsymbol{\Sigma} \mathbf{U}^H$, its optimal precoder is constructed as
\begin{equation}
\operatorname{diag}\left(\sqrt{{q}_1}, \ldots, \sqrt{{q}_{N_s}}\right)\left[\mathbf{u}_1, \ldots, \mathbf{u}_{N_s}\right],
\end{equation}
where the power allocation coefficients $\left\{{q}_1, \ldots, {q}_{N_s}\right\}$ are obtained using the water-filling algorithm over the singular values $\left\{\mu_1, \mu_2, \ldots, \mu_{N_s}\right\}$. Therefore, with perfect reception of $\mathbf{H P}$ at the BS, we select the first $N_s$ columns of $\mathbf{H P}$ and divide each column by $\alpha$. Thus, the optimal precoder can be derived using the water-filling algorithm over the magnitudes of the $\mathbf{H P}$ columns.

\end{appendices}

\section*{Acknowledgment}
{The authors would like to thank Jinfeng Du and Harish Viswanathan for their helpful suggestions, discussions, and feedback, and Jie Chen for providing high-quality upper midband channel datasets.}

\bibliographystyle{ieeetr}
\begingroup
\bibliography{AZREF}
\endgroup

\end{document}